\newcounter{relctr} 
\everydisplay\expandafter{\the\everydisplay\setcounter{relctr}{0}} 
\DeclareMathOperator*{\argmin}{argmin} 
\DeclareMathOperator*{\argmax}{argmax} 
\DeclareMathOperator*{\minimize}{minimize} 
\DeclareMathOperator*{\maximize}{maximize} 
\newtheorem{theorem}{Theorem}
\newcommand\labelrel[2]{%
  \begingroup
    \refstepcounter{relctr}%
    \stackrel{\textnormal{(\alph{relctr})}}{\mathstrut{#1}}%
    \originallabel{#2}%
  \endgroup
}
\newtheoremstyle{remarkstyle}%
  {}
  {}
  {\itshape}
  {}
  {\itshape}
  {.}
  {.5em}
  {}
\theoremstyle{remarkstyle}
\newcommand{\algorithmicprintspacing}{\renewcommand{\baselinestretch}{0.9}\normalsize}
\begin{document}

\title{Waveform Optimization and Beam Focusing for Near-field Wireless Power Transfer with Dynamic Metasurface Antennas and Non-linear Energy Harvesters}

\author{Amirhossein~Azarbahram,~\IEEEmembership{Graduate~Student~Member,~IEEE,}
        Onel~L.~A.~López,~\IEEEmembership{Senior~Member,~IEEE},
        and~Matti~Latva-Aho,~\IEEEmembership{Fellow,~IEEE} 
\thanks{A. Azarbahram, O. L\'opez and M. Latva-Aho are with Centre for Wireless Communications (CWC), University of Oulu, Finland, (e-mail: \{amirhossein.azarbahram, onel.alcarazlopez, matti.latva-aho\}@oulu.fi).}%
\thanks{This work is partially supported in Finland by the Finnish Foundation for Technology Promotion, Academy of Finland (Grants 348515 and 346208 (6G Flagship)), the Finnish-American Research and Innovation Accelerator, and by the European Commission through the Horizon Europe/JU SNS project Hexa-X-II (Grant Agreement no. 101095759).}}


\maketitle

\begin{abstract}
Radio frequency (RF) wireless power transfer (WPT) is a promising technology for future wireless systems. However, the low power transfer efficiency (PTE) is a critical challenge for practical implementations. One of the main inefficiency sources is the power consumption and loss introduced by key components such as high-power amplifier (HPA) and rectenna, thus they must be carefully considered for PTE optimization. Herein, we consider a near-field RF-WPT system with a dynamic metasurface antenna (DMA) at the transmitter and non-linear energy harvesters. We provide a mathematical framework to calculate the power consumption and harvested power from multi-tone signal transmissions. Based on this, we propose an approach relying on alternating optimization and successive convex approximation for waveform optimization and beam focusing to minimize power consumption while meeting energy harvesting requirements. Numerical results show that increasing the number of transmit tones reduces the power consumption by leveraging the rectifier's non-linearity more efficiently. Moreover, they demonstrate that increasing the antenna length improves the performance, while DMA outperforms fully-digital architecture in terms of power consumption. Finally, our results verify that the transmitter focuses the energy on receivers located in the near-field, while energy beams are formed in the receivers' direction in the far-field region.

\end{abstract}

\begin{IEEEkeywords}
Radio frequency wireless power transfer, waveform design, beamforming, dynamic metasurface antennas, near-field channels.
\end{IEEEkeywords}

\IEEEpeerreviewmaketitle

\section{Introduction}

\IEEEPARstart{F}{uture} wireless systems will facilitate efficient and eco-friendly communication across a myriad of low-power devices, fostering a sustainable society. Achieving this requires uninterrupted connectivity among these devices and with the underlying infrastructure, all while mitigating disruptions arising from battery depletion \cite{intro1,intro2, lópez2023highpower}. This is potentially facilitated by energy harvesting (EH) technologies providing wireless charging, thus, easing the maintenance of Internet of Things (IoT) devices and increasing their lifespan. Moreover, EH may lead to improved energy efficiency and reduced emission footprints across the network \cite{intro3}. 

EH receivers may harvest energy from two types of sources: those that exist in the surrounding environment, and those that are specifically designated for energy transmission. From a transmission perspective, the latter is supported by wireless power transfer (WPT) technologies, e.g., based on inductive coupling, magnetic resonance coupling, laser power beaming, and radio frequency (RF) radiation. Among them, RF-WPT is promising for charging multiple receivers relatively far from the transmitter by exploiting the broadcast nature of wireless channels. Furthermore, this can be accomplished over the same infrastructure used for wireless communications. Notably, the most important challenge toward maturing RF-WPT is related to increasing the inherently low end-to-end power transfer efficiency (PTE) \cite{intro3}. Herein, we focus on RF-WPT, which is referred to as WPT in the following.

\subsection{Preliminaries}

\begin{figure}[t]
    \centering
    \includegraphics[width=\columnwidth]{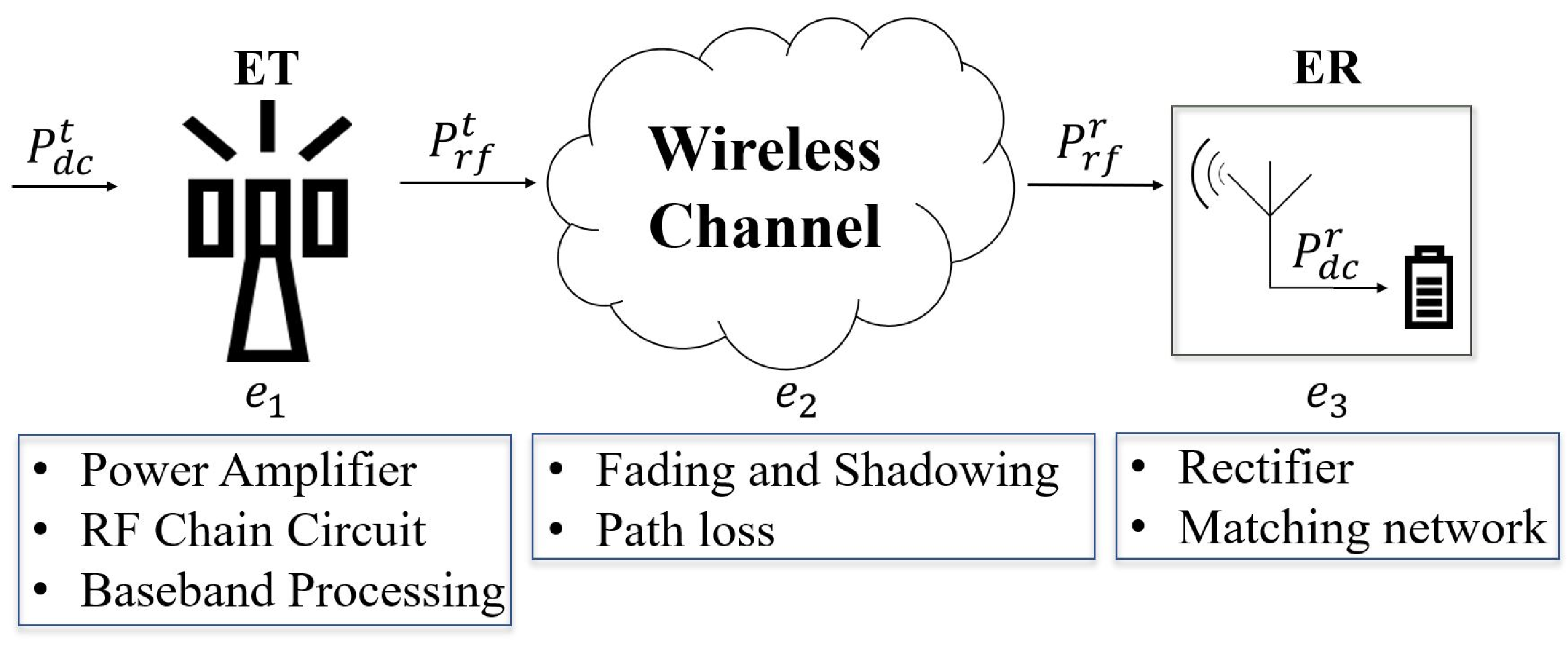}
    \caption{Block diagram of a typical WPT system. The power consumption and loss sources are listed under each block.}
    \label{fig:wptbase}
\end{figure}

The end-to-end PTE depends on the performance of the key building blocks of a WPT system, namely energy transmitter (ET), wireless channel, and energy receiver (ER) as illustrated in Fig.~\ref{fig:wptbase}. At first, a signal is generated and amplified using a direct current (DC) power source at the ET. Then, it is upconverted to the RF domain and transmitted over the wireless channel. Finally, the ER converts the received RF signal to DC for EH purposes. Indeed, the end-to-end PTE comprises: DC-to-RF, RF-to-RF, and RF-to-DC power conversion efficiency, i.e.,
\begin{equation}\label{eq:wptbase}
    e = \underbrace{\frac{P^{t}_{rf}}{P^{t}_{dc}}}_{e_1} \times \underbrace{\frac{P^{r}_{rf}}{P^{t}_{rf}}}_{e_2} \times \underbrace{\frac{P^{r}_{dc}}{P^{r}_{rf}}}_{e_3} = \frac{P^{r}_{dc}}{P^{t}_{dc}}.
\end{equation}
In WPT, both transmitter and ER introduce non-linearities to the signals affecting the amount of harvested power. Indeed, an appropriately designed transmit signal leveraging these non-linearities may reduce the power consumption at the transmitter and/or increase the RF-to-DC conversion efficiency at the ER \cite{clerckx2017communicationsWPTbase, clreckxWFdesign}. Specifically, using multiple transmit frequency tones can lead to high peak-to-average power ratio (PAPR) signals at the ER, enhancing the rectifier RF-to-DC conversion efficiency \cite{papr2}. Meanwhile, the RF signal can be focused towards the ER using energy beamforming (EB), which affects the transmit/receive waveform, to cope with the channel inefficiencies captured by $e_2$, thereby enhancing the amount of RF power that can be harvested \cite{intro3}. Notice that the active transmit components consume power, while the passive elements introduce power losses, both of which impact $e_1$ and must be considered. Therefore, $e_1$, $e_2$, and $e_3$ are correlated, suggesting that their joint optimization may lead to significant gains in terms of end-to-end PTE.

The end-to-end PTE is also affected by the system power consumption. One of the main factors contributing greatly to the power consumption is the transmitter's architecture, which also determines the beamforming approach. For example, in a fully-digital structure, each antenna element necessitates a dedicated RF chain with its corresponding high-power amplifier (HPA), consuming a significant amount of power. An HPA aims to amplify the input signal to compensate for the path loss and fading in a wireless system. Moreover, the signal amplification in the HPA requires a DC power source, which accounts for the majority of the power consumption. The main drawback of the fully-digital architecture is the high complexity and cost, making it impractical for applications requiring massive multiple-input multiple-output (MIMO) implementations. Alternatively, analog architectures using, e.g., passive phase shifters, are cheaper but offer limited degrees of freedom for EB. Thus, a hybrid architecture implementation combining both approaches is often more appealing in practice. Hybrid architectures offer a trade-off between complexity (cost) and beamforming flexibility \cite{hybridbeamsurvey, hybrid2}. 

Although hybrid beamforming using phase shifters promotes cost reduction, it still requires complex analog networks for phase shifting. Moreover, the complexity of traditional phase-shifter-based hybrid architectures scales with the complexity of this analog network. For instance, a fully-connected network of phase shifters for connecting the output of the RF chains to the antenna elements is the most complex architecture, while there are some other low-complexity architectures, e.g., dynamic array-of-subarrays architecture \cite{arraysubarray1, arraysubarray2}. There are emerging technologies to provide hybrid beamforming capability with an even lower cost and complexity, e.g., reconfigurable intelligent surface (RIS)-aided systems \cite{RIS-basis} and dynamic metasurface antennas (DMAs) \cite{DMAbase}. Notice that, RIS is an assisting node that provides the passive beamforming capability using reflective elements,\footnote{There are also some novel RIS-based transmit structures that leverage RIS with active elements to provide extra capabilities \cite{active_ris}.} while DMA is a transceiver consisting of configurable metamaterial elements and a limited number of RF chains. DMA avoids analog network implementation challenges and provides hybrid beamforming capability with low cost and complexity. Each of these architectures may be preferred based on the system setup. For instance, employing multiple low-cost RISs helps to cover the blind spots that are prone to weak signal reception in a large area. However, reflecting surfaces in RIS-assisted systems lack baseband processing capability to perform channel estimation and send pilot signals. Thus, acquiring accurate enough channel state information (CSI) to attain a suitable passive beamforming gain might force huge overheads to the system \cite{RIS-Challenges}. On the other hand, DMA is a transceiver and has sufficient baseband processing capability for channel estimation. However, its implementation requires some RF chains making it more costly than RIS for large-scale implementation. All in all, both of these architectures support low-cost transmitter deployments, while the choice highly depends on the considered system setup. Interestingly, the authors in \cite{RIS-DMA} utilize a system model comprising both RIS and DMA structures for uplink MIMO communication while assuming that the channel is perfectly known.

\subsection{Prior works}

There are many works either focusing on EB, waveform optimization, or joint waveform and beamforming design for fully-digital WPT systems. The authors in \cite{OnelRadioStripes} utilize EB to power multiple devices in a MIMO system consisting of radio stripes, while the deployment problem of this transmit architecture is investigated in \cite{azarbahram2023radio}. Furthermore, a low-complexity beamforming design relying on the statistics of the channel is proposed in \cite{onellowcomp} to fairly power a set of single-antenna devices. In the mentioned works, none of the practical system non-linearities are considered and the focus is on the received RF power at the devices. In \cite{BFRFDCsingletone}, transmit beamforming and RF and DC combining at the ER are leveraged to increase the received DC power in a MIMO WPT system. Although this work considers the rectifier's non-linearity, the waveform design is not investigated. Moreover, a low-complexity waveform design for single-user setups is proposed in \cite{brunolowcomp}, while the large-scale multi-antenna WPT scenario is addressed in \cite{largescale}. The authors in \cite{jointWFandBFMIMOclreckx} leverage beamforming and a multi-sine waveform in a MIMO WPT system to enhance the harvested power. Notably, the frameworks in \cite{brunolowcomp, largescale, jointWFandBFMIMOclreckx} consider the rectifier non-linearity. Interestingly, the authors in \cite{SISOAllClreckx} perform waveform and beamforming optimization while considering both main non-linearity sources (i.e., HPA and rectifier) aiming to maximize the harvested DC power in a WPT system. 

Although most of the works on WPT systems in the literature focus on a traditional fully-digital architecture, novel low-cost transmitters have also attained significant attention recently. For instance, DMA is utilized in \cite{DMAWPT, MyEBDMA} for a near-field WPT system, while the authors in \cite{hybrid_minpower_SWIPT} propose a minimum-power beamforming design for meeting quality of service requirements of the ERs in a simultaneous wireless information and power transfer (SWIPT) system. However, none of these studies have considered the rectifier non-linearity and its impact on the harvested DC power. Notably, the joint waveform and beamforming design problem in RIS-aided WPT and SWIPT systems is investigated in \cite{clerckxRISWPT} and \cite{RIS-SWIPT-Clreckx}, respectively. Furthermore, the two latter works consider the EH non-linearity at the ER side, thus, taking into account its impact on the harvested DC power.

\subsection{Contributions}

All in all, WPT systems have received considerable attention for some time. Still, more effort is needed to reduce the system power consumption, thus increasing the end-to-end PTE. For this, low-cost multi-antenna transmitters like DMA are appealing and may pave the way for charging devices efficiently in massive IoT deployments. Moreover, multiple studies aimed to enhance the amount of harvested DC power (with rectifier non-linearity) in far-field WPT systems or received RF power (without rectifier non-linearity) in near-field WPT. As mentioned before, the ER does not perform RF-to-DC conversion linearly and a proper waveform design must leverage the ER's RF-to-DC conversion efficiency dynamics. Thus, it is imperative to take into account the impact of the rectifier's non-linearity. To the best of our knowledge, no work has yet investigated the power consumption of a multi-antenna WPT system for meeting the EH requirements of a multi-user setup while considering the ER non-linearity, especially when using low-cost transmitters. Herein, we aim precisely to fill this research gap. Our main contributions are as follows: \\
1) We formulate a joint waveform optimization and beam focusing problem for a multi-user multi-antenna WPT system with both a fully-digital and a DMA architecture. Due to the huge potential of near-field WPT systems for future practical WPT applications \cite{lópez2023highpower}, we present our system model relying on a near-field wireless channel, which can inherently capture far-field conditions as well. Notice that there are some previous works \cite{clerckxRISWPT, SISOAllClreckx} focused on increasing the amount of harvested power in far-field WPT systems while considering the ER non-linearity. However, the literature lacks a minimum-power waveform and beamforming design (even for far-field channels) that can deal with meeting EH requirements. This is a critical gap to fill since such formulation mimics a practical setup where the EH receivers inform their DC power demands and the WPT system must serve them with minimum power consumption, thus maximum end-to-end PTE. Since the HPA is an active element incurring most of the power consumption at the transmitter side, we model the power consumption of a class-B HPA as a function of its output power. Notably, our problem for fully-digital architecture shares some similarities with the one discussed in \cite{SISOAllClreckx} as objectives and constraints are interchanged. However, there are two main differences. First, since the problem is highly non-linear and non-convex, aiming for a minimum-power waveform and beamforming design leads to a different and complex problem due to the appearance of the highly non-convex constraints and a non-convex power consumption term relying on the transmit waveform. Second, the main focus of our work here is on a DMA-assisted system, which introduces much more complexity to the problem due to the coupling between the optimization variables, i.e., the frequency response of the elements and the digital beamforming complex weights, and the Lorentzian-constrained phase response of the elements. Note that the phase shift introduced by the metamaterial elements is correlated with their amplitude, which results in a different beamforming problem than other architectures, e.g., RIS-assisted systems \cite{RIS-SWIPT-Clreckx} and phase shifter-based hybrid beamforming \cite{hybrid2}. Mathematically, when dealing with those latter architectures, both RIS passive elements or phase shifters introduce a phase shift to the signal with constant loss,\footnote{Note that in most of the works in the literature \cite{RIS-SWIPT-Clreckx, hybrid_minpower_SWIPT}, without loss of generality, this phase shifting process by the analog network or RIS elements is considered to be lossless.} while each phase-shifting configuration in DMA elements leads to a different propagation loss introduced to the transmit signal. Meanwhile, observe that our utilized DMA model is similar to \cite{DMAWPT, near-field}, which is the standard model in the literature \cite{DMAbase}. However, we investigate this model for WPT when considering ER non-linearity, which leads to a different problem in terms of complexity and required solutions, especially when aiming for minimum-power waveform and beamforming design instead of aiming for received power maximization. All in all, the complexities associated with our specific problem make the existing optimization frameworks for WPT systems inapplicable to our system, calling for novel approaches.

2) We propose a method relying on alternating optimization and successive convex approximation (SCA) to efficiently solve the waveform and beamforming optimization problem in the DMA-assisted WPT system. Specifically, we decouple the optimization problem to maximize the minimum received DC power by tuning the frequency response of the metamaterial elements, while minimizing the power consumption needed for meeting the EH requirements when optimizing the digital precoders. Generally, a huge complexity is introduced to the waveform optimization problems by time sampling since the number of samples should be relatively large to result in a reliable framework \cite{SISOAllClreckx, clreckxWFdesign, jointWFandBFMIMOclreckx}. To cope with this, we reformulate the received DC power of the ERs based on the spectrum of the received waveform, which removes the time dependency in the problem. Then, the metamaterial elements and the digital precoders are alternatively optimized using SCA. Motivated by the influence of variable initialization on the SCA performance, we propose a low-complexity initialization algorithm for the digital precoders and DMA weights by leveraging the channel characteristics and dedicating RF signals to the ERs. Furthermore, the complexity of the proposed optimization framework scales with the number of ERs, antenna elements, and frequency tones. Note that the location of the ERs is considered to be perfectly known at the transmitter side, leading to perfect CSI at the transmitter. Indeed, the proposed framework can be used for any channel model as long as the CSI is known. However, the focus of the work is on the radiative near-field region for WPT applications; thus, we also use the term beam focusing in addition to traditional beamforming, which is typical for far-field communications.
    
3) We illustrate the convergence of the proposed optimization method numerically and show that the complexity increases with the antenna length, number of transmit tones, and number of receivers. Furthermore, we provide evidence that increasing the antenna length and the number of transmit tones reduces power consumption, while it increases with the number of receivers and ER distance. Moreover, our findings evince that the DMA-assisted architecture outperforms the fully-digital counterpart in terms of power consumption in our system. Moreover, the performance gap between these architectures is shown to depend on the specific HPAs' saturation power, number of devices, and ER distance. Additionally, we verify by simulation that the transmitter can accurately focus the energy on the ER location in the near-field region, while energy beams are only formed toward specific directions in the far-field region.

The remainder of the paper is structured as follows. Section~\ref{TX and consum} introduces the system model, including the transmit architectures, and signal and power consumption modeling. The optimization problem for a joint waveform and beamforming design, together with the proposed solving approach, are elaborated in Section~\ref{Optimization}. Section~\ref{sec:init} discusses the proposed initialization algorithm, Section~\ref{result} presents the numerical results, while Section~\ref{conclusion} concludes the paper.

\textbf{Notations:} Bold lower-case letters represent column vectors, while non-boldface characters refer to scalars, ${\mathbf{a}}\odot{\mathbf{b}}$ denotes the Hadamard product of $\mathbf{a}$ and $\mathbf{b}$, and $\{x\}$ is a set that contains $x$. The $l_2$-norm of a vector is denoted as $| \cdot |$. The average w.r.t. $x$ is represented by $\mathbb{E}_x$ and $(\cdot)^T$ and $(\cdot)^\star$ are used to indicate the transpose and conjugate of a matrix or vector, respectively. Furthermore, the real and the imaginary parts of a complex number are denoted by $\Re \{\cdot\}$ and $\Im \{\cdot\}$, respectively. Additionally, $\lfloor{\cdot}\rfloor$ is the floor operator, and $\langle{\cdot}\rangle$ denotes the phase of a complex number. 

\section{System Model}\label{TX and consum}

\begin{figure*}[t]
    \centering
    \includegraphics[width=0.4\textwidth]{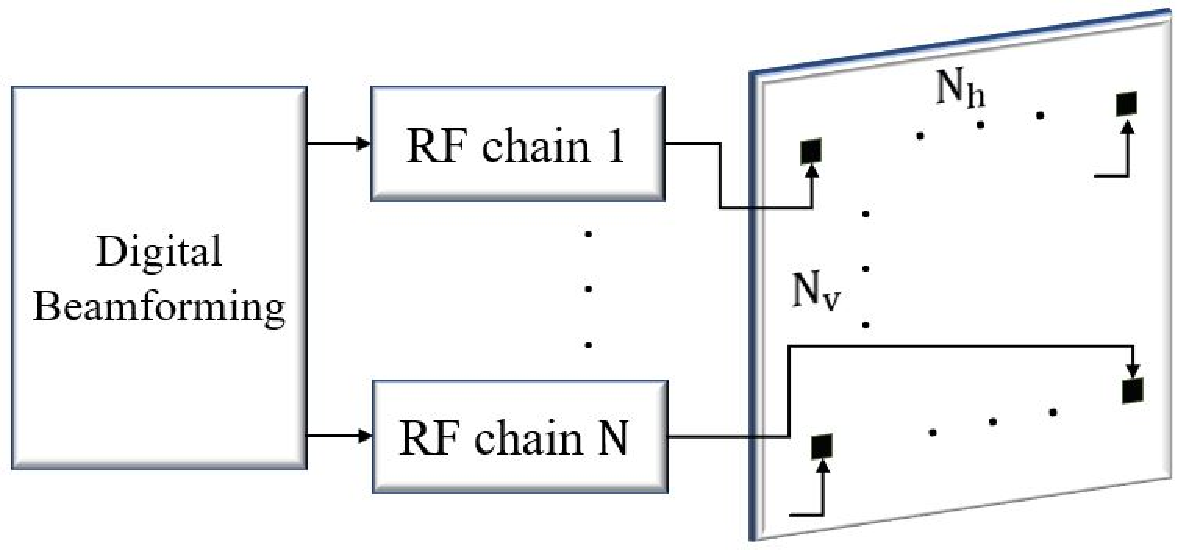}
    \label{fig:fd}
    \hspace{5mm}
    \includegraphics[width=0.5\textwidth]{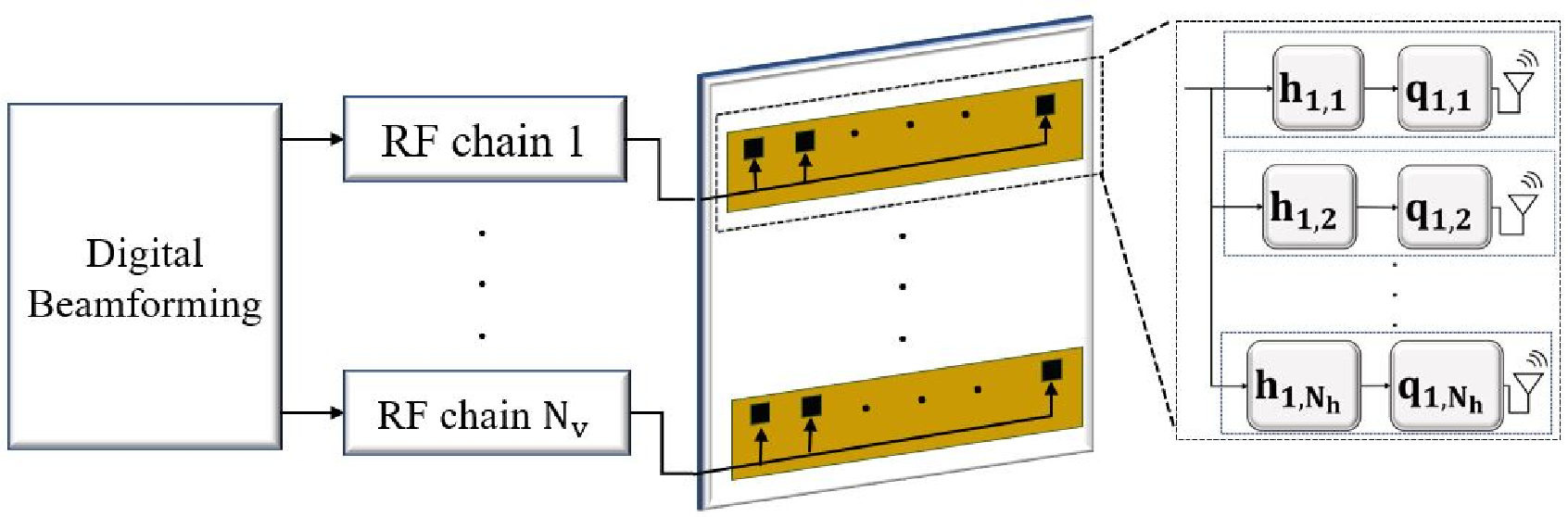}
    \label{fig:dma}
    \caption{Transmit antenna architectures. (a) fully-digital architecture (left) and (b) DMA-assisted architecture (right).}
    \label{fig:strcut}
\end{figure*}

We consider a multi-antenna WPT system to charge $M$ single-antenna EH devices. The received RF power at the ER is transferred to the rectifier input using a matching network. Then, it is converted to DC power by the rectifier, while $\bar{P}_{m}$ denotes the EH requirement of the $m$th ER. 

As previously mentioned, multi-tone waveforms can be exploited to leverage the rectifier non-linearity and achieve a better end-to-end PTE. Hence, we consider multi-tone signals with $N_f$ tones at frequencies $f_1, f_2, \cdots, f_{N_f}$ for power transmission purposes. Without loss of generality, we set $f_n = f_1 + (n-1) \Delta_f,\quad n = 1, \ldots, N_f$,
where $f_1$ and $\Delta_f$ are the lowest sub-carrier frequency and the sub-carrier spacing, respectively. 

\subsection{Transmit Antenna Architectures}
The transmitter is equipped with a uniform planar array (UPA) and $N_{rf} \geq M$ RF chains. The radiating elements are spaced uniformly, with $N_h$ and $N_v$ being the number of elements in the horizontal and vertical direction, respectively. Thus, the total number of elements is $N = N_v \times N_h$. Two types of transmit antenna architectures are considered:
   
1) Fully-digital architecture, which requires a dedicated RF chain for each radiating element, thus $N_{rf} = N$, as shown in Fig.~\ref{fig:strcut}a. In a fully-digital architecture, there is a single-stage beamforming process. Despite the high deployment cost and complexity, a fully-digital structure offers the highest number of degrees of freedom in beamforming. 

2) DMA-assisted architecture, which comprises $N_v$ waveguides, each fed by a dedicated RF chain and composed of $N_h$ configurable metamaterial elements. Therefore, the number of RF chains, and consequently the cost and complexity, is considerably reduced compared to digital structures, making DMA suitable for massive MIMO applications. Notice that DMA-assisted systems employ a two-stage beamforming process, i.e., digital beamforming, followed by the tuning of the amplitude/phase variations introduced by the metamaterial elements, as illustrated in Fig.~\ref{fig:strcut}b. Herein, $q_{i, l}$ is the tunable frequency response of the $l$th metamaterial element in the $i$th waveguide, while $h_{i,l}$ is the corresponding waveguide propagation loss, which will be explained in detail later.    

\subsection{Channel Model}

In wireless communications, the region where the receivers are located between the Fraunhofer and Fresnel distances denoted respectively as $d_{fr}$ and $d_{fs}$, is the radiative near-field region, which is referred to as the near-field region in the following. Specifically, a receiver at distance $d$ from a transmitter experiences near-field conditions if 
\begin{equation}
    \sqrt[3]{\frac{D^4}{8\lambda_1}} = d_{fs} < d < d_{fr} = \frac{2D^2}{\lambda_1}, \nonumber
\end{equation}
where $D$ is the antenna diameter, i.e., the largest size of the antenna aperture, $\lambda_1 = c/{f_1}$ is the corresponding wavelength to the system operating frequency, and $c$ is the speed of light. Notice that both system frequency and antenna form factor influence the region of operation. Therefore, by moving toward higher frequencies, e.g., millimeter wave and sub-THz bands, and/or utilizing larger antenna arrays, the far-field communication assumption regarding planar wavefronts may not be valid anymore. Instead, wavefronts impinging a receive node may be strictly spherical, thus, with advanced capabilities to focus the transmit signals on specific spatial points rather on spatial directions. Thus, we use the term beam focusing since we consider the system to operate in the radiative near-field region.

Notice that one of the main applications of WPT systems is in indoor environments with line-of-sight and near-field communication, e.g., restaurants, warehouses, and shopping malls. Thus, we employ the near-field line-of-sight channel model described in \cite{near-field}. The Cartesian coordinate of the $l$th radiating element in the $i$th row is $\mathbf{g}_{i,l} = [x_{i,l}, y_{i,l}, z_{i,l}]^T$. Additionally, $i = 1,2,\ldots,N_v$ and $l = 1,2,\ldots,N_h$. The channel coefficient between ER $m$ and the $l$th element in the $i$th row at the $n$th sub-carrier is given by
\begin{equation}\label{eq:channelcoef}
    \gamma_{i,l,m,n} = A_{i,l,m,n} e^{\frac{-j2\pi}{\lambda_n} d_{i,l,m}},
\end{equation}
where ${\frac{2\pi}{\lambda_n} d_{i,l,m}}$ is the phase shift caused by the propagation distance of the $n$th tone, with wavelength $\lambda_n$, and $d_{i,l,m} = |\mathbf{g}_m - \mathbf{g}_{i,l}|$ is the Euclidean distance between the element and the ER located at $\mathbf{g}_m$. Moreover,
\begin{equation}
    A_{i,l,m,n} = \sqrt{F(\Theta_{i,l,m})}\frac{\lambda_n}{4\pi d_{i,l,m}}
\end{equation}
is the corresponding channel gain coefficient. Here, $\Theta_{i,l,m} = (\theta_{i,l,m},\psi_{i,l,m})$ is the elevation-azimuth angle pair, and $F(\Theta_{i,l,m})$ is the radiation profile of each element. In addition, we employ the radiation profile as presented in \cite{anetnna_radiation}, where
\begin{equation}\label{eq:radiation}
        F(\Theta_{i,l,m}) = \begin{cases}
        G_t\cos{(\theta_{i,l,m})}^{\frac{G_t}{2}-1}, & \theta_{i,l,m} \in [0,\pi/2],
        \\
        0, & \text{otherwise},
        \end{cases}
\end{equation}
$G_t = 2(b+1)$ is the transmit antenna gain, and $b$ denotes the boresight gain, which depends on the specifications of the antenna elements. Note that the channel coefficient becomes $A_{m} e^{-j\psi_{i,l,m}}$ for far-field communication, where $A_m$ only depends on the distance of the receiver $m$ from the transmitter and $\psi_{i,l,m}$ is solely determined by the ER direction and the relative disposition of the antenna elements within the array. Notably, the radiation profile might be more complicated in practice, e.g., due to back lobes. However, this work considers perfect CSI at the transmitter side; thus, the proposed frameworks apply also to other radiation profiles and channel models and we adapt a basic ideal radiation profile without loss of generality.

\subsection{Transmit \& Receive Signals}\label{Section:Sys:TxRXSig}

The signal at the input of the $i$th HPA is given by
\begin{equation}\label{eq:basicform}
    x_i(t) =  \sum_{n = 1}^{N_f} {\omega_{i, n} e^{j2\pi f_n t}}, \quad i = 1, \ldots, N_{rf},
\end{equation}
where $\omega_{i, n}$ is the complex weight of the $n$th frequency tone of the $i$th muti-tone waveform.
 The HPA introduces signal distortion and models such as the Rapp model \cite{rapp1991effects} capture this non-linearity. It is shown in \cite{SISOAllClreckx} that when the HPA operates in the non-linear regime, choosing a single-carrier is preferred to a multi-carrier waveform. The reason is that a single-carrier waveform is less deteriorated by the adverse effect of the signal distortion caused by the HPA when operating in the non-linear regime. On the other hand, when HPAs operate in the linear regime, thus, not causing amplitude and phase distortion to the signal, multi-carrier waveforms are preferred since they leverage the rectifier's non-linearity and enhance the harvested power performance. 

Note that the non-linear regime of the HPA happens near the saturation voltage. Thus, in practice, the HPAs can be properly chosen to have a suitable value of the saturation voltage based on the system setup and avoid operation in the non-linear regime. Since the aim of this work is to design multi-carrier waveforms for DMA-assisted WPT, we consider HPAs to operate in the linear regime. Mathematically, the output signal of the HPA is modeled as $x^{hpa}_i(t) =  Gx_i(t)$, where $G$ is the HPA gain. The rest of the signal modeling formulation will be presented separately for different transmit architectures in the following.

\subsubsection{Fully-Digital Architecture} 

Herein, $N_{rf} = N$, thus, the real transmit signal at the output of the $l$th element in the $i$th row of the UPA can be expressed as
\begin{align}
\quad x^{FD}_{i, l}(t) = \Re \biggl\{x^{hpa}_{u{(i,l)}}(t) \biggr\}  
= G \sum_{n=1}^{N_f} \Re \biggl\{\omega_{{u{(i,l)}},n}e^{j 2 \pi f_n t}\biggr\}, 
\end{align}
where ${u{(i,l)}} = (i - 1)N_h + l$. Thereby, the RF signal at the $m$th ER when exploiting the fully-digital architecture can be expressed as
\begin{align}\label{eq:FD_RCV}
    y^{FD}_m(t)  &= \sum_{i=1}^{N_v} \sum_{l=1}^{N_h} \sum_{n = 1}^{N_f} 
    \gamma_{i,l,m,n} x^{FD}_{i, l}(t) \nonumber \\ &=  G \sum_{i=1}^{N_v} \sum_{l=1}^{N_h} \sum_{n = 1}^{N_f}\Re \Bigl\{\gamma_{i,l,m,n} \omega_{{u{(i,l)}}, n} e^{j2\pi f_{n}t}\Bigr\}.
\end{align}    
Furthermore, by defining $\mathbf{w}_n = [\omega_{1, n}, \ldots, \omega_{N, n}]^T$ and $\boldsymbol{\gamma}_{m,n} = [\gamma_{m, n, 1, 1}, \gamma_{m, n, 1, 2},\ldots, \gamma_{m,n,N_v, N_h}]^T$, \eqref{eq:FD_RCV} can be reformulated as 
\begin{equation}\label{eq:FD_RCV_reform}
    y^{FD}_m(t)  =  G \sum_{n = 1}^{N_f}\Re \Bigl\{\boldsymbol{\gamma}_{m,n}^T \mathbf{w}_n e^{j2\pi f_{n}t}\Bigr\}.
\end{equation} 

\begin{figure}[t]
    \centering
    \includegraphics[width=0.6\columnwidth]{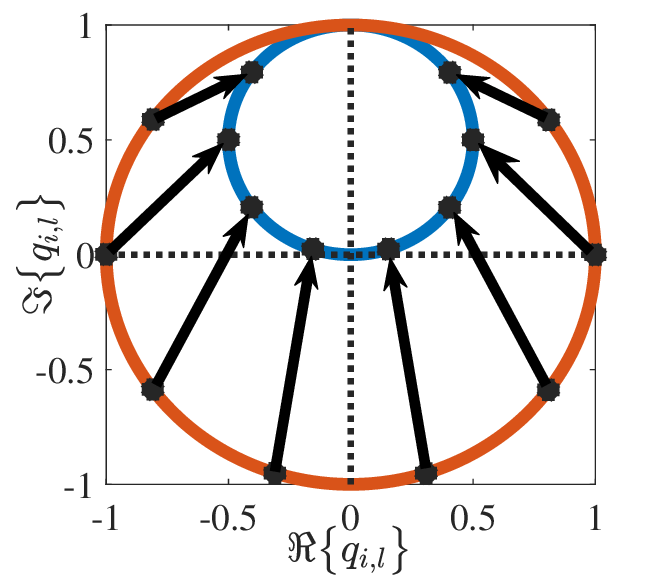}
    \caption{The Lorentzian constrained (the inner circle) and the ideal
weights (outer circle) in the complex plane. The arrows depict the mapping between
the weights.}
    \label{fig:phasemodel}
\end{figure}

\subsubsection{DMA-assisted Architecture}

In metasurface antennas, the phase and amplitude that can be configured in the radiating elements are correlated due to the Lorentzian resonance. Herein, we capture this correlation by \cite{DMAformulation}
\begin{equation}\label{eq:lorentzweight}
    q_{i, l} \in \mathcal{Q} = \Big\{{ (j+e^{j \phi_{i,l}})}/{2}\ \Big|\ \phi_{i,l} \in[0,2 \pi]\Big\}, \quad \forall i, l,
\end{equation}
where $\phi_{i,l}$ are the tunable phase of the $l$th metamaterial element in the $i$th waveguide. As shown in Fig. \ref{fig:phasemodel}, the ideal phase shifting exhibits a constant unit amplitude, i.e., no losses, while the amplitude of the Lorentzian weights depends on the configured phase.

Herein, microstrip lines are used as waveguides, similar to \cite{DMAuplink, near-field}. The propagation model of the signal within a microstrip is expressed as
    $h_{i, l} = e^{-(l-1)d_l(\alpha_i+j\beta_i)}$,
where $d_l$ is the inter-element distance, $\alpha_i$ represents the waveguide attenuation coefficient, and $\beta_i$ is the propagation constant. The mathematical model of the DMA is represented in Fig.~\ref{fig:strcut}b. For the sake of simplicity and since the goal is to investigate the power consumption of DMAs for RF-WPT purposes, we consider in this work a simplified case with no mutual coupling between the elements of the metasurface. However, when investigating DMAs from an electromagnetic perspective, one should carefully consider the impact of mutual coupling \cite{direnzo_coupling}.

 Notice that the number of RF chains in the DMA is reduced to $N_{rf} = N_v$. Hence, the real transmit signal radiated from the $l$th element in the $i$th microstrip can be expressed as
\begin{align}\label{dmaQQ}
    x_{i,l}^{DMA}(t) &=  G \Re \left\{h_{i,l}q_{i,l}x_i^{hpa}(t)\right\} 
     \nonumber \\ &=G \sum_{n = 1}^{N_f} \Re \left\{h_{i,l}q_{i,l} {\omega_{i, n} e^{j2\pi f_n t}}\right\}.
\end{align}
Furthermore, the RF signal received at the $m$th ER in the DMA-assisted system is given by
\begin{align}\label{eq:dmarcv2}
    y^{DMA}_m(t) &= \sum_{i=1}^{N_v}\! \sum_{l=1}^{N_h}\! \sum_{n = 1}^{N_f} \gamma_{i,l,m,n} x_{i,l}^{DMA}(t) \nonumber \\
    &= G \sum_{i=1}^{N_v}\! \sum_{l=1}^{N_h}\! \sum_{n = 1}^{N_f} \!
        \Re \Bigl\{\gamma_{i,l,m,n} h_{i,l} q_{i,l} \omega_{i, n} e^{j2\pi f_{n}t} \Bigr\}.
\end{align}
Finally, we define 
\begin{align}
    \bar{\mathbf{w}}_n = [\underbrace{\omega_{1, n}, \ldots, \omega_{1, n}}_{N_h}, \ldots, \underbrace{\omega_{N_{v}, n}, \ldots, \omega_{N_{v}, n}}_{N_h}]^T \in \mathbb{C}^{N \times 1}, \nonumber
\end{align}
$\mathbf{q} = [q_{1, 1}, \ldots, q_{N_v, N_h}]^T$, $\mathbf{h} = [h_{1, 1}, \ldots, h_{N_v, N_h}]^T$, and reformulate \eqref{eq:dmarcv2} as 
\begin{equation}\label{eq:dmarcv2reform}
    y^{DMA}_m(t)  = G  \sum_{n = 1}^{N_f}
        \Re \Bigl\{(\boldsymbol{\gamma}_{m,n} \odot \mathbf{q} \odot \mathbf{h} )^T\bar{\mathbf{w}}_n e^{j2\pi f_{n}t} \Bigr\}.
\end{equation}

\subsection{Rectenna}

At the receiver side, the RF signal is converted to DC. This can be modeled by an antenna equivalent circuit and a single diode rectifier as illustrated in Fig. \ref{fig:rect}. The RF signal at the input of the antenna is denoted as $y_m(t)$ and has an average power of $\mathbb{E}\bigl\{{y_m(t)}^2\bigr\}$. Let us denote the input impedance of the rectifier and the impedance of the antenna equivalent circuit by $R_{in}$ and $R_{ant}$, respectively. Thus, assuming perfect matching $(R_{in}=R_{ant})$, the input voltage at the rectifier of the $m$th ER is given by $v_{{in},{m}}(t) = y_m(t)\sqrt{R_{ant}}$. Furthermore, the diode current can be formulated as
\begin{equation}
    i_d(t) = i_s\bigl( e^{\frac{v_d(t)}{\hat{n} v_t}} - 1 \bigr),
\end{equation}
where $i_s$ is the reverse bias saturation current, $\hat{n}$ is the ideality factor, $v_t$ is the thermal voltage, and $v_d(t) = v_{in}(t) - v_{o}(t)$ is the diode voltage. Moreover, $v_{{o},{m}}(t)$ is the output voltage of the $m$th rectifier, which can be approximated utilizing the Taylor expansion as \cite{jointWFandBFMIMOclreckx, largescale}
\begin{equation}\label{eq:rectifierbase}
    v_{{o},{m}} =  \sum_{i\ even, i\geq2}^{n_0} K_i \mathbb{E}_t\bigl\{y_m(t) ^i\bigr\},
\end{equation}
where $K_i = \frac{R_{ant}^{i/2}}{i!{(\eta_0 v_t)}^{i - 1}}$. Herein, we focus on the low-power regime, for which it was demonstrated in \cite{largescale, clreckxWFdesign} that truncating the Taylor expansion at $n_0=4$ is accurate enough. Therefore, (\ref{eq:rectifierbase}) can be written as
\begin{equation}\label{eq:vout}
    v_{{o},{m}} = K_2 \mathbb{E}_t\bigl\{y_m(t)^2\bigr\} + K_4 \mathbb{E}_t\bigl\{y_m(t)^4\bigr\}
\end{equation}
and the DC power at the $m$th ER is given by
\begin{equation}\label{eq:Pdc}
    P_{{dc},m} =\frac{{v^2_{{o},{m}}}}{R_L},
\end{equation}
where $R_L$ is the load impedance of the rectifier, while $y_m(t)$ is equal to $y_m^{DMA}(t)$ and $y_m^{FD}(t)$ in the DMA-assisted and fully-digital architectures, respectively. 
\begin{figure}[t]
    \centering
    \includegraphics[width=0.7\columnwidth]{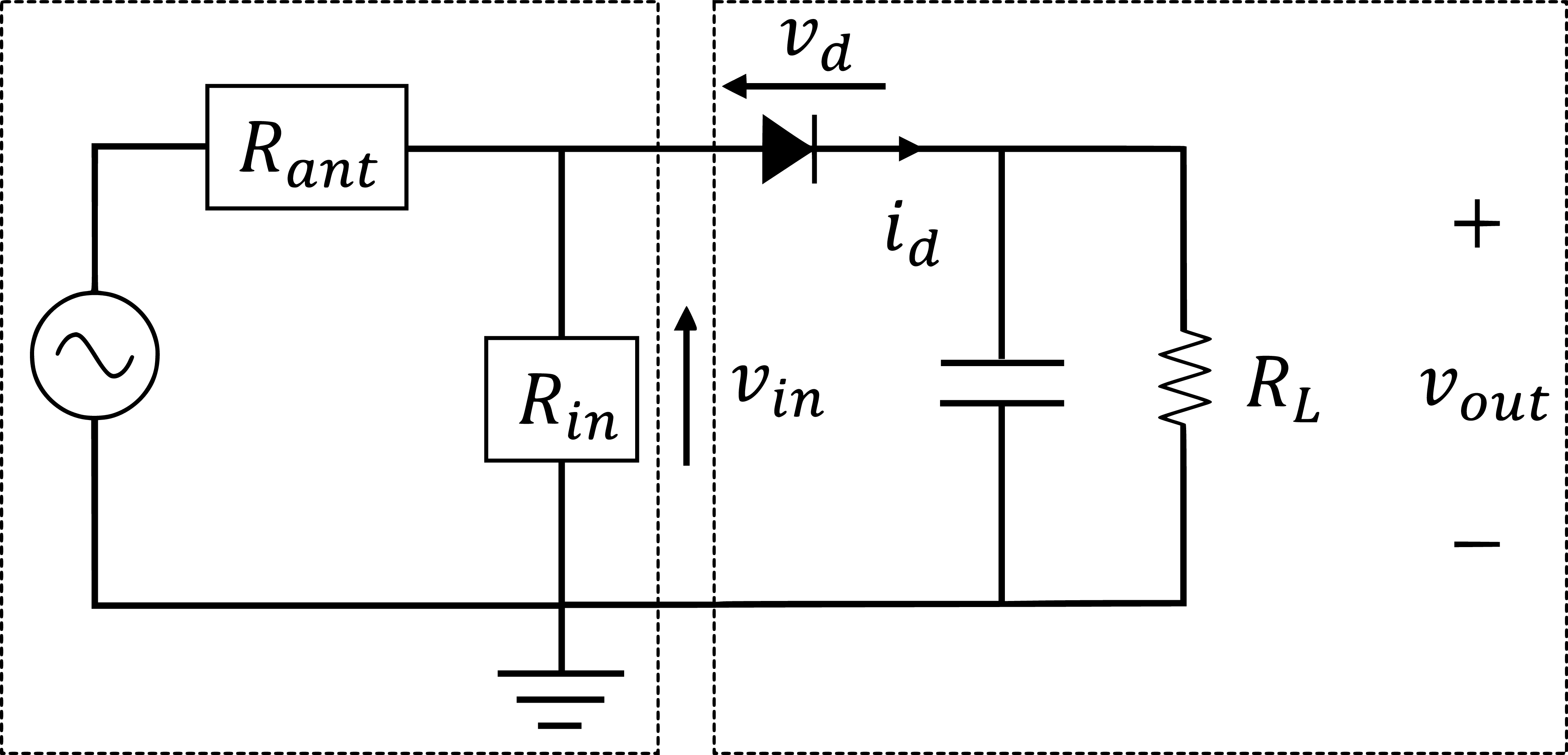}
    \caption{Antenna equivalent circuit (left) and a single diode rectifier (right) 
    \cite{jointWFandBFMIMOclreckx}.}
    \label{fig:rect}
\end{figure}

 \subsection{Power Consumption Model}\label{sec:powcons}
Lowering the system power consumption is desirable for increasing the end-to-end PTE and this greatly depends on the HPA design and operation. Let us denote the maximum efficiency and the maximum output power of a class-B HPA as $\bar{\eta}$ and $P_{max}$, respectively. Hereby, the efficiency of the $i$th HPA at time $t$ is expressed as $\eta_i(t) = {\bar{\eta}}\sqrt{{P_{out, i}(t)}/{P_{max}}}$ \cite{dohertyConsum}, where $P_{out, i}(t)$ is the output power of the HPA. Then, the corresponding power consumption is
\begin{equation}\label{eq:hpacons}
    P_{hpa,i}(t) = \frac{P_{out, i}(t)}{\eta_i(t)} = \frac{1}{\bar{\eta}}\sqrt{{P_{max}{P_{out, i}(t)}}}.
\end{equation}
Notice that each HPA is in charge of supplying the radiating power by all the antenna elements, which are fed by its corresponding RF chain. Thus, $P_{out, i}(t)$ is $\sum_{l = 1}^{N_h} {x^{DMA}_{i,l}(t)}^2$ in DMA-assisted architecture, and  ${x^{FD}_{i, l}(t)}^2$ denotes the output power of the RF chain connected to the $l$th element in the $i$th row.

There are also other power consumption sources in the WPT system. For instance, digital baseband power consumption, which is required to perform the digital beamforming, and the RF chain circuit power consumption, including the mixer, local oscillator, and filter. However, the power consumption of these sources is usually considered fixed and is negligible compared to the HPA. Thus, without loss of generality, the total power consumption of the system is given by
\begin{equation}\label{eq:power_consumption}
    P_c =   \sum_{i = 1}^{N_{rf}} \mathbb{E}_t\bigl\{P_{{hpa},i}(t)\bigr\} + P_{in},
\end{equation}
where 
   $P_{in} = \sum_{i = 1}^{N_{rf}} \sum_{n = 1}^{N_f} |\omega_{i,n}|^2$
is the total input power. 

\section{Joint Waveform \& Beamforming Optimization}\label{Optimization}

This section formalizes the optimization problem and describes the utilized approach when employing the aforementioned transmit architectures. 

\subsection{Problem Formulation}

The goal is to obtain a minimum-power waveform and beamforming design to meet the EH requirements of the ERs. For this, we assume that the locations of the receivers are fixed and known at the transmitter, which leads to the perfect CSI assumption. Thus, by utilizing \eqref{eq:Pdc} and substituting \eqref{eq:hpacons} into \eqref{eq:power_consumption}, the optimization problem can be formulated as
\begin{subequations}\label{probmain}
\begin{align}
\label{probmaina} \minimize_{\mathcal{V}} \quad &  \frac{\sqrt{P_{max}}}{\bar{\eta}} \sum_{i = 1}^{N_{rf}} \mathbb{E}_t\bigl\{\sqrt{P_{out, i} (t)}\bigr\} + \sum_{i = 1}^{N_{rf}} \sum_{n = 1}^{N_f} |\omega_{i,n}|^2\\
\textrm{subject to} \label{probFDb} \quad &v_{o, m}^2  \geq R_L \bar{P}_{m} , \quad  m = 1, \ldots, M,
\end{align}
\end{subequations}
 where $\mathcal{V}$ is the set of optimization variables, which is equal to $\{\omega_{i,n}, q_{i,l} \}_{\forall i,l,n}$ and $\{\omega_{i,n}\}_{\forall i,n}$ for DMA-assisted and fully-digital architectures, respectively. Problem \eqref{probmain} deals with extensive non-linearity since the objective and constraints are highly non-linear and non-convex functions due to the signal model, rectenna non-linearity, and the coupling between the digital and analog beamforming variables in DMA. Notice that this problem formulation is equivalent to maximizing the PTE. Specifically, the solution will force the DC harvested power of the ERs to be equal to the EH requirement ($\bar{P}_m$). Thus, the numerator of \eqref{eq:wptbase} is constant, and decreasing the power consumption leads to maximizing the PTE.

\subsection{Optimization Framework for DMA-assisted Architecture}

One of the challenges making the problem intractable is the coupling between the optimization variables. To cope with this, we propose using alternating optimization by first optimizing the digital precoders when considering fixed $\{q_{i,l}\}_{\forall i, l}$, followed by optimizing the metamaterial elements' frequency response for fixed $\{\omega_{i,n}\}_{\forall i, n}$.

\subsubsection{Optimization with fixed $\{q_{i,l}\}_{\forall i, l}$}

Let us proceed by rewriting the optimization problem as 
\begin{subequations}\label{probdma}
\begin{align}
\label{probdmaa}\minimize_{\{ \omega_{i,n}\}_{\forall i,n}} \quad & \hspace{-2mm}  \frac{\sqrt{P_{max}}}{\bar{\eta}} \sum_{i = 1}^{N_{v}} \mathbb{E}_t\biggl\{\sqrt{\sum_{l = 1}^{N_h} x^{DMA}_{i,l}(t)^2}\biggr\} \!+\! \sum_{i = 1}^{N_{v}} \sum_{n = 1}^{N_f} |\omega_{i,n}|^2 \\
\textrm{subject to} \label{probdmab} \quad & \hspace{-2mm} {K_2 \mathbb{E}_t\bigl\{y_m(t)^2\bigr\} + K_4 \mathbb{E}_t\bigl\{y_m(t)^4\bigr\}} \geq \sqrt{{R_L}\bar{P}_{m}} , \forall m.
\end{align}
\end{subequations}
The problem is still highly non-linear and non-convex. Interestingly, it can be easily verified that $v_{o, m}$ is convex and increasing with respect to ${y_m(t)}^2$, and ${y_m(t)}$ is affine with respect to $\bar{\mathbf{w}}$. Therefore, $v_{o, m}$ is a convex function with respect to  $\bar{\mathbf{w}}$ given a fixed $q_{i,l}, \forall i,l$ \cite{SISOAllClreckx}. Thus, although \eqref{probdmab} is a non-convex constraint, its left-hand side consists of a convex function. These properties motivate us to adapt the SCA method \cite{boyd2004convex} to optimize the problem iteratively. Specifically, the constraints can be approximated by their first-order Taylor expansion.

Let us proceed by rewriting $v_{o, m}$ considering fixed $q_{i,l}$. For this, we define $\mathbf{a}_{m,n} = \boldsymbol{\gamma}_{m,n} \odot \mathbf{q} \odot \mathbf{h}$ and leverage the fact that the average power of a signal is equal to the power of its spectrum. Thus, we have \cite{clreckxWFdesign}
\begin{align}
\label{eq:Erewrit1}&\mathbb{E}_t\bigl\{y_m(t)^2\bigr\} = g_{m,1}(\{\bar{\mathbf{w}}_n\}_{\forall n}) = \frac{G^2}{2} \sum_{n}  |\mathbf{a}^T_{m,n} \bar{\mathbf{w}}_n|^2, 
\\
    \label{eq:Erewrit2}&\mathbb{E}_t\bigl\{y_m(t)^4\bigr\} = g_{m,2}(\{\bar{\mathbf{w}}_n\}_{\forall n}) = \frac{3G^4}{8} \sum_{\substack{n_0, n_1, n_2, n_3 \\ n_0 + n_1 = n_2 + n_3}} ... \nonumber \\ &(\mathbf{a}^T_{m,n_0} \bar{\mathbf{w}}_{n_0})(\mathbf{a}^T_{m,n_1} \bar{\mathbf{w}}_{n_1})(\mathbf{a}^T_{m,n_2} \bar{\mathbf{w}}_{n_2})^\star(\mathbf{a}^T_{m,n_3} \bar{\mathbf{w}}_{n_3})^\star.
\end{align}
By leveraging \eqref{eq:Erewrit1} and \eqref{eq:Erewrit2}, we can write
\begin{multline}\label{eq:powerineq}
    {{v}_{o, m}} \geq K_2 g_{m,1}(\{\bar{\mathbf{w}}_n^{(0)}\}_{\forall n}) + {K_4} g_{m,2}(\{\bar{\mathbf{w}}_n^{(0)}\}_{\forall n}) \\ + \sum_{n} \tilde{g}_{m,n}(\bar{\mathbf{w}}_n^{(0)})\bigl(\bar{\mathbf{w}}_n - \bar{\mathbf{w}}_n^{(0)}\bigr),
\end{multline}
where
\begin{multline}
    \tilde{g}_{m,n}(\bar{\mathbf{w}}_n^{(0)}) = G^2 K_2 (\mathbf{a}^T_{m,n}\bar{\mathbf{w}}_n^{(0)})\mathbf{a}^T_{m,n} + \\ \frac{3 K_4 G^4}{8}\biggl[4{|\mathbf{a}^T_{m,n}|}^4{|\bar{\mathbf{w}}_n^{(0)}|}^2{{\bar{\mathbf{w}}{_n^{(0)}}}}^T + \\
    8 \sum_{n1} {|\mathbf{a}^T_{m,n}|}^2{|\mathbf{a}^T_{m,n_1}|}^2{|\bar{\mathbf{w}}_{n_1}^{(0)}|}^2{{\bar{\mathbf{w}}{_n^{(0)}}}}^T + \\
    \sum_{\substack{n_2, n_3 \\ n_2 + n_3 = 2n \\ n_2 \neq n_3}} 2{(\mathbf{a}^T_{m,n_2} \bar{\mathbf{w}}_{n_2}^{(0)})}^\star
    {(\mathbf{a}^T_{m,n_3} \bar{\mathbf{w}}_{n_3}^{(0)})}^\star
    {(\mathbf{a}^T_{m,n}\bar{\mathbf{w}}_n^{(0)})}\mathbf{a}^T_{m,n} + \\
    2(\mathbf{a}^T_{m,n_2}\bar{\mathbf{w}}_{n_2}^{(0)})(\mathbf{a}^T_{m,n_3}\bar{\mathbf{w}}_{n_3}^{(0)}){{(\mathbf{a}^T_{m,n}\bar{\mathbf{w}}_n^{(0)})}}^\star {\mathbf{a}^H_{m,n}} + \\
    \sum_{\substack{n_1, n_2, n_3 \\ -n_1 + n_2 + n_3 = n \\ n \neq n_1 \neq n_2 \neq n_3}} 2(\mathbf{a}^T_{m,n_1}\bar{\mathbf{w}}_{n_1}^{(0)})
    {(\mathbf{a}^T_{m,n_2}\bar{\mathbf{w}}_{n_2}^{(0)})}^\star 
    {(\mathbf{a}^T_{m,n_3}\bar{\mathbf{w}}_{n_3}^{(0)})}^\star \mathbf{a}^T_{m,n} + \\
    2 (\mathbf{a}^T_{m,n_1}\bar{\mathbf{w}}_{n_1}^{(0)})
    (\mathbf{a}^T_{m,n_2}\bar{\mathbf{w}}_{n_2}^{(0)}) 
    {(\mathbf{a}^T_{m,n_3}\bar{\mathbf{w}}_{n_3}^{(0)})}^\star  {\mathbf{a}^H_{m,n}}\biggr]
\end{multline}
is the first-order Taylor coefficient of ${{v}_{o, m}}$ at point $\{\bar{\mathbf{w}}_n^{(0)}\}_{\forall n}$. 

\begin{theorem}\label{theorem:1}
The objective function in \eqref{probdmaa} can be upper bounded by using
\begin{equation}\label{eq:reformobj}
    \sum_{i = 1}^{N_{rf}} \mathbb{E}_t\biggl\{\sqrt{\sum_{l = 1}^{N_h} x^{DMA}_{i,l}(t)^2}\biggr\}  \leq   \sum_{i = 1}^{N_{rf}} \sqrt{\sum_{l = 1}^{N_h} \sum_{n = 1}^{N_f}\frac{G^2}{2}|\omega_{i,n} q_{i,l}h_{i,l}|^2},
\end{equation}
where the right-hand side of the inequality is convex w.r.t. $\omega_{i,n}$.
\end{theorem}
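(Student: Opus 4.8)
The plan is to derive the claimed upper bound from Jensen's inequality applied to the concave square root, and then to recognize the resulting expression as a sum of weighted Euclidean norms in order to establish convexity.

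First, I would fix a waveguide index $i$ and write $Z_i(t) = \sum_{l=1}^{N_h} x^{DMA}_{i,l}(t)^2 \geq 0$. Since $\sqrt{\cdot}$ is concave and $\mathbb{E}_t\{\cdot\}$ is an averaging (expectation) operator, Jensen's inequality gives $\mathbb{E}_t\{\sqrt{Z_i(t)}\} \leq \sqrt{\mathbb{E}_t\{Z_i(t)\}} = \sqrt{\sum_{l=1}^{N_h}\mathbb{E}_t\{x^{DMA}_{i,l}(t)^2\}}$, where the last equality is the linearity of $\mathbb{E}_t\{\cdot\}$.

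Next, I would evaluate the per-element average power. From \eqref{dmaQQ}, $x^{DMA}_{i,l}(t)$ is a real multi-tone signal whose complex coefficient at tone $f_n$ is $G h_{i,l} q_{i,l}\omega_{i,n}$; since the tones $f_1,\dots,f_{N_f}$ are distinct, all cross terms vanish under time averaging, and the standard identity for the power of a sum of sinusoids yields $\mathbb{E}_t\{x^{DMA}_{i,l}(t)^2\} = \frac{G^2}{2}\sum_{n=1}^{N_f}|h_{i,l} q_{i,l}\omega_{i,n}|^2$. Substituting this into the Jensen bound and summing over $i$ reproduces exactly the right-hand side of \eqref{eq:reformobj}, which establishes the inequality.

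Finally, for convexity in $\{\omega_{i,n}\}$ with $q_{i,l}$ and $h_{i,l}$ held fixed, I would note that the double sum inside each square root factors into an $l$-part and an $n$-part, namely $\frac{G^2}{2}\bigl(\sum_l|h_{i,l}q_{i,l}|^2\bigr)\bigl(\sum_n|\omega_{i,n}|^2\bigr) = \kappa_i\,\norm{\boldsymbol{\omega}_i}^2$, with $\kappa_i = \frac{G^2}{2}\sum_l|h_{i,l}q_{i,l}|^2 \geq 0$ and $\boldsymbol{\omega}_i = [\omega_{i,1},\dots,\omega_{i,N_f}]^T$. Each summand is thus $\sqrt{\kappa_i}\,\norm{\boldsymbol{\omega}_i}$, a non-negative scaling of a Euclidean norm and hence convex, so the whole right-hand side is convex as a sum of convex functions. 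The main obstacle is the direction of the inequality in the first step: it works precisely because $\sqrt{\cdot}$ is concave, so one must be careful to confirm concavity and that $\mathbb{E}_t\{\cdot\}$ is genuinely an averaging operator before invoking Jensen; the power computation and the convexity check are then routine.
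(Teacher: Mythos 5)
Your proposal is correct and follows essentially the same route as the paper's own proof: Jensen's inequality for the concave square root, linearity of the time average, the time--frequency power equivalence for the multi-tone signal, and finally factoring each summand into $\sqrt{\kappa_i}\,\lvert\boldsymbol{\omega}_i\rvert$ to read off convexity as a scaled $l_2$ norm. No gaps.
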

\begin{proof}
    The proof is provided in Appendix~\ref{appen1}.
\end{proof}

Now, we can reformulate the problem at point $\{ \bar{\mathbf{w}}_n^{(0)}, S_i^{(0)}, {v}_{o, m}^{(0)}\}_{\forall i, n, m}$ as
\begin{subequations}\label{probdmaSCA}
\begin{align}
\label{probdmaSCAa} \minimize_{ \{\bar{\mathbf{w}}_n\}_{\forall n}} \quad &  \Upsilon \!=\!  \frac{\sqrt{P_{max}}}{\bar{\eta}}\sum_{i = 1}^{N_{rf}} \sqrt{\sum_{l = 1}^{N_h} \sum_{n = 1}^{N_f}\frac{G^2}{2}|\omega_{i,n} q_{i,l}h_{i,l}|^2} + P_{in} \\
\textrm{subject to} \label{probdmaSCAc} \quad & \sqrt{{R_L}\bar{P}_{m}}  \leq K_2g_{m,1}(\{\bar{\mathbf{w}}_n^{(0)}\}_{\forall n}) + \nonumber \\  &K_4g_{m,2}(\{\bar{\mathbf{w}}_n^{(0)}\}_{\forall n}) + \nonumber \\ & \sum_{n} \tilde{g}_{m,n}(\bar{\mathbf{w}}_n^{(0)})\bigl(\bar{\mathbf{w}}_n - \bar{\mathbf{w}}_n^{(0)}\bigr), \quad \forall m,\\
\label{probdmaSCAe} \quad & \bar{\mathbf{w}}_n [(i - 1)N_h + l] = \bar{\mathbf{w}}_n [i], \forall i, l = 1, \ldots, N_h.
\end{align}
\end{subequations}
Notice that by utilizing \eqref{eq:reformobj} and the fact that \eqref{probdmaa} consists of two positive terms, one has that \eqref{probdmaSCAa} serves as an upper bound for \eqref{probdmaa}. Moreover, the inequality in \eqref{eq:powerineq} ensures that the solution to this problem is a feasible solution of \eqref{probdma} at each point. Interestingly, the problem has become convex and can be solved at a given point by standard convex optimization tools, e.g., CVX \cite{cvxref}.\footnote{For the sake of easy notation and facilitating the reader's understanding, we have kept the problem in vector form and introduced the constraint \eqref{probdmaSCAe} into the problem. However, the problem can be easily converted to scalar form, which removes the mentioned constraint.} Moreover, the solution can be iteratively updated using the SCA algorithm \cite{boyd2004convex}.

\subsubsection{Optimization with fixed $\{\omega_{i,n}\}_{\forall i, n}$}

Herein, the non-convex Lorentzian constraint of the metamaterials' frequency response makes the problem extremely difficult to solve. To tackle this, we propose decoupling the problem to first maximize the minimum harvested power when optimizing $q_{i,l}$. This allows us to leverage the beamforming capability of the metamaterial elements and provide degrees of freedom to further reduce the power consumption when optimizing $\omega_{i,n}$ \cite{MyEBDMA}. Hereby, the optimization problem with fixed $\{\omega_{i,n}\}_{\forall i, n}$ can be reformulated as 
\begin{subequations}\label{probdmaQcon}
\begin{align}
\label{probdmaQcona} \maximize_{\{ q_{i,l}\}_{\forall i,l}} \quad & \min_{m} \frac{v_{o, m}^2}{R_L} \\
\textrm{subject to} \label{probdmaQg}  \quad & q_{i,l} \in \mathcal{Q}, \forall i,l,
\end{align}
\end{subequations}
where \eqref{probdmaQg} is the non-convex Lorentzian constraint. Next, we cope with the complexity caused by the metamaterial elements. 
\begin{theorem}\label{theorem:2}
Problem \eqref{probdmaQcon} is equivalent to
\begin{subequations}\label{probdmaQQcon}
\begin{align}
\label{probdmaQQcona} \maximize_{\{ q_{i,l}\}_{\forall i,l}} \quad &  R \\
\textrm{subject to} \label{probdmaQQQg}  \quad & \Re\bigl\{q_{i,l}\bigr\}^2 + (\Im\bigl\{q_{i,l}\bigr\} - 0.5)^2 \leq 0.25, \forall i,l, \\
\label{probdmaQQconf} \quad & R \leq  K_2 {E}\bigl\{y_m(t)^4\bigr\} + K_4{E}\bigl\{y_m(t)^4\bigr\}, \quad \forall m.
\end{align}
\end{subequations}
\end{theorem}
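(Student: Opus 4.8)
The plan is to establish the claimed equivalence in two independent moves: an epigraph reformulation that handles the max--min objective together with constraint~\eqref{probdmaQQconf}, and a purely geometric identification of the Lorentzian set that handles constraint~\eqref{probdmaQQQg}.

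First, for the objective, I would introduce the auxiliary variable $R$ and rewrite $\max_{\mathbf{q}}\min_m v_{o,m}^2/R_L$ in epigraph form. The key observation is that $v_{o,m}=K_2\mathbb{E}_t\{y_m(t)^2\}+K_4\mathbb{E}_t\{y_m(t)^4\}$ is strictly positive, since $K_2,K_4>0$ and the even moments are nonnegative (and positive for any nonzero waveform). Because $R_L>0$ is a constant, $\min_m v_{o,m}^2/R_L=(\min_m v_{o,m})^2/R_L$, and since $t\mapsto t^2/R_L$ is strictly increasing on the range of $v_{o,m}$, the maximizing $\mathbf{q}$ is unchanged if the square and the constant $R_L$ are dropped. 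This yields exactly the constraint $R\le v_{o,m}$ for all $m$ of \eqref{probdmaQQconf} (whose first moment should read $\mathbb{E}_t\{y_m(t)^2\}$), so the two problems share the same set of optimal $\mathbf{q}$.

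Second, I would show that \eqref{probdmaQg} and \eqref{probdmaQQQg} describe the boundary and the interior of a single disk. Substituting $q_{i,l}=(j+e^{j\phi_{i,l}})/2$ gives $\Re\{q_{i,l}\}=\tfrac12\cos\phi_{i,l}$ and $\Im\{q_{i,l}\}=\tfrac12(1+\sin\phi_{i,l})$, whence $\Re\{q_{i,l}\}^2+(\Im\{q_{i,l}\}-\tfrac12)^2=\tfrac14$ for every $\phi_{i,l}$, and conversely every point of that circle corresponds to some $\phi_{i,l}$. Hence $\mathcal{Q}$ is exactly the circle of radius $\tfrac12$ centred at $j/2$, and \eqref{probdmaQQQg} is the closed disk it bounds, i.e. $\mathrm{conv}(\mathcal{Q})$. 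Replacing $\mathcal{Q}$ by this disk is therefore a convex relaxation, and it remains to argue that it is tight.

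The hard part will be this tightness, i.e. showing that an optimal $\mathbf{q}$ of \eqref{probdmaQQcon} can always be taken on the circle $\mathcal{Q}$. I would lean on convexity: exactly as argued just before Theorem~\ref{theorem:1}, each form $\mathbf{a}_{m,n}^T\bar{\mathbf{w}}_n$ with $\mathbf{a}_{m,n}=\boldsymbol{\gamma}_{m,n}\odot\mathbf{q}\odot\mathbf{h}$ is affine in $\mathbf{q}$, and $v_{o,m}$ is a convex increasing function of $y_m(t)^2$, so each $v_{o,m}$ is convex in $\mathbf{q}$; the feasible region is a product of disks, a convex compact set whose extreme points are precisely the points of $\mathcal{Q}$. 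Since the reformulated problem pushes the $v_{o,m}$ upward, one expects the optimum to sit at an extreme point, activating the disk constraints. The delicate point is that the epigraph objective maximizes $\min_m v_{o,m}$, a minimum of convex functions rather than a single convex function, so boundary attainment does not follow from a one-line extreme-point argument; moreover the natural fix---pushing each interior $q_{i,l}$ radially outward to the circle---is not obviously monotone, because the fixed bias $j/2$ in the Lorentzian model means $v_{o,m}$ need not increase along such a ray. I therefore expect the rigorous step to require either a KKT/active-constraint analysis certifying that the disk constraints bind at optimality, or a direct construction mapping each relaxed optimum to a boundary point without decreasing any active $v_{o,m}$, and hence without decreasing $\min_m v_{o,m}$. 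This tightness argument is the main obstacle; once it is in place, every optimum of \eqref{probdmaQQcon} is feasible for \eqref{probdmaQcon} with equal objective, which together with the objective reformulation gives the claimed equivalence.
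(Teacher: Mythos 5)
Your decomposition is exactly the paper's: relax the Lorentzian set \eqref{probdmaQg} to the closed disk \eqref{probdmaQQQg}, pass to epigraph form using the fact that the objective \eqref{probdmaQcona} is a positive increasing function of the rectifier output voltage, and then argue that the relaxation is tight so that the optimum lands back on the circle. Your first two steps are correct and somewhat more careful than the paper's one-sentence treatment; in particular you are right that \eqref{probdmaQQconf} contains a typo (the first term should be $K_2\mathbb{E}_t\{y_m(t)^2\}$, i.e.\ the constraint is $R\le v_{o,m}$), and your identification of $\mathcal{Q}$ as the circle of radius $1/2$ centred at $j/2$, with \eqref{probdmaQQQg} its closed convex hull, matches what the paper uses implicitly.

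The difference lies in the tightness step, which you correctly single out as the crux and leave open. The paper's own argument there is precisely the construction you are skeptical of: it writes an interior point as $a\vec{e}$ with $0\le a\le 1$ and $\vec{e}$ on the Lorentzian circle (a ray from the origin, which itself lies on the circle, so each admissible phase has a unique maximal-magnitude representative on the circle), and asserts that only $a=1$ yields the ``maximum gain in the required direction,'' hence the disk constraints must be active at the optimum. This is a heuristic rather than a proof: $v_{o,m}$ depends on the coherent sums $\hat{\mathbf{a}}_{m,n}^T\mathbf{q}$ over all elements, and scaling one $q_{i,l}$ radially outward while holding the others fixed need not increase every $|\hat{\mathbf{a}}_{m,n}^T\mathbf{q}|$, let alone the minimum over $m$ of the $v_{o,m}$ --- exactly the monotonicity failure you anticipate, compounded by the fact that the objective is a minimum of convex functions so extreme-point attainment is not automatic. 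Your proposal is therefore incomplete, but the missing piece is one the paper does not rigorously supply either; the KKT/active-constraint or perturbation argument you sketch is what would be needed to make either version airtight.
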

\begin{proof}
    The proof is provided in Appendix~\ref{appen2}.
\end{proof}

Notice that \eqref{probdmaQQQg} keeps the frequency response of the metamaterials within the Lorentzian circle, while \eqref{probdmaQQconf} ensures that $R$ is below the minimum output voltage of the ERs. Problem \eqref{probdmaQQcon} is still difficult to solve due to the non-convex constraint \eqref{probdmaQQconf}. To cope with this, we define $\hat{\mathbf{a}}_{m,n} = \boldsymbol{\gamma}_{m,n} \odot \bar{\mathbf{w}}_n \odot \mathbf{h}$ and write
\begin{align}
\label{eq:Erewrit1q}&\mathbb{E}_t\bigl\{y_m(t)^2\bigr\} = {e}_{m,1}({\mathbf{q}}) = \frac{G^2}{2} \sum_{n}  |\hat{\mathbf{a}}^T_{m,n} {\mathbf{q}}|^2, 
\\
    \label{eq:Erewrit2q}&\mathbb{E}_t\bigl\{y_m(t)^4\bigr\} = {e}_{m,2}(\mathbf{q}) = \frac{3G^4}{8} \sum_{\substack{n_0, n_1, n_2, n_3 \\ n_0 + n_1 = n_2 + n_3}} ... \nonumber \\ &(\hat{\mathbf{a}}^T_{m,n_0} \mathbf{q})(\hat{\mathbf{a}}^T_{m,n_1} \mathbf{q})(\hat{\mathbf{a}}^T_{m,n_2} \mathbf{q})^\star(\hat{\mathbf{a}}^T_{m,n_3} \mathbf{q})^\star.
\end{align}
Similar to the case of digital precoders, it can be observed that ${v}_{o, m}$ is convex with respect to $\mathbf{q}$, thus, we can write 
\begin{multline}\label{eq:powerineqQ}
    \hspace{-3mm}{{v}_{o, m}} \geq K_2 {e}_{m,1}({\mathbf{q}}^{(0)}) + K_4 {e}_{m,2}({\mathbf{q}}^{(0)}) +  \tilde{e}_{m}(\mathbf{q}^{(0)})\bigl({\mathbf{q}} - {\mathbf{q}}^{(0)}\bigr),
\end{multline}
where
\begin{multline}
    \tilde{e}_{m}(\mathbf{q}^{(0)}) = G^2K_2\sum_n (\hat{\mathbf{a}}^T_{m,n}{\mathbf{q}}^{(0)})\hat{\mathbf{a}}^T_{m,n} + \\ \frac{3K_4G^4}{8} \sum_{\substack{n_0, n_1, n_2, n_3 \\ n_0 + n_1 = n_2 + n_3}}\biggl[(\hat{\mathbf{a}}^T_{m,n_1} \mathbf{q})(\hat{\mathbf{a}}^T_{m,n_2} \mathbf{q})^\star(\hat{\mathbf{a}}^T_{m,n_3} \mathbf{q})^\star \hat{\mathbf{a}}^T_{m,n_0} + \\(\hat{\mathbf{a}}^T_{m,n_0} \mathbf{q})(\hat{\mathbf{a}}^T_{m,n_2} \mathbf{q})^\star(\hat{\mathbf{a}}^T_{m,n_3} \mathbf{q})^\star \hat{\mathbf{a}}^T_{m,n_1} + \\ 
    (\hat{\mathbf{a}}^T_{m,n_0} \mathbf{q})(\hat{\mathbf{a}}^T_{m,n_1} \mathbf{q})(\hat{\mathbf{a}}^T_{m,n_3} \mathbf{q})^\star \hat{\mathbf{a}}_{m,n_2}^H + \\
    (\hat{\mathbf{a}}^T_{m,n_0} \mathbf{q})(\hat{\mathbf{a}}^T_{m,n_1} \mathbf{q})(\hat{\mathbf{a}}^T_{m,n_2} \mathbf{q})^\star \hat{\mathbf{a}}_{m,n_3}^H\biggr].
\end{multline}
Hereby, \eqref{probdmaQQcon} can be reformulated at point $\mathbf{q}^{(0)}$ as
\begin{subequations}\label{RRprobdmaQQcon}
\begin{align}
\label{RRprobdmaQcona} \maximize_{\mathbf{q}, R} \quad &  R \\
\textrm{subject to} 
\label{RRprobdmaQQg}  \quad &\hspace{-2mm} R \leq K_2\bar{g}_{m,1}({\mathbf{q}}^{(0)})  \nonumber \\ &\hspace{-2mm}+ K_4 \bar{g}_{m,2}({\mathbf{q}}^{(0)}) + \tilde{e}_{m}(\mathbf{q}^{(0)})\bigl({\mathbf{q}} - {\mathbf{q}}^{(0)}\bigr), \\
\label{RRprobdmaQQh}  \quad &\hspace{-2mm} \Re\bigl\{\mathbf{q}[(i-1)N_h + l]\bigr\}^2 + \nonumber \\ &\hspace{-2mm}(\Im\bigl\{\mathbf{q}[(i-1)N_h + l]\bigr\} - 0.5)^2 \leq 0.25, \forall i,l, \eqref{probdmaQQconf}
\end{align}
\end{subequations}
which is convex in the neighborhood of $\mathbf{q}^{(0)}$.

\subsubsection{Alternating Optimization Algorithm}

We transformed the original problem into a convex form for both fixed $\mathbf{q}$ and $\bar{\mathbf{w}}$ and in the neighborhood of $\bar{\mathbf{w}}^{(0)}$ and $\mathbf{q}^{(0)}$ in \eqref{probdmaSCA} and \eqref{RRprobdmaQQcon}, respectively. However, there is still an important challenge, i.e., the initialization of the variables. Specifically, when using iterative algorithms relying on convex approximation, e.g., SCA, the starting point must be feasible and the performance is influenced by it. One can get a feasible point by setting an extremely large amplitude for the digital weights, but this may lead to poor performance. To cope with this, we propose a low-complexity initialization method, which will be explained in the next section. This prevents the initial consumed power from becoming extremely large, helping the SCA algorithm to start from a relatively good initial point.

Algorithm \ref{alg:DMA} illustrates the proposed alternating SCA-based approach for waveform and beamforming optimization. First, the digital precoders and the frequency responses of the metamaterial elements are initialized. After that, digital precoding weights and the frequency response of the metamaterials are optimized in an alternating fashion in lines \ref{alg1:line:alter_start}-\ref{alg1:line:alter_end}. First, SCA is used to iteratively find a suboptimal solution for fixed digital precoders. Specifically, $\mathbf{q}$ is updated in each iteration until convergence in lines \ref{alg1:line:QSCA_start}-\ref{alg1:line:QSCA_end}. Then, the obtained $\mathbf{q}$ is used to run the SCA algorithm for finding suboptimal digital precoders $\{\bar{\mathbf{w}}_n\}_{\forall n}$ through lines \ref{alg1:line:digitSCA_start}-\ref{alg1:line:digitSCA_end}. These two SCA-based optimizations are repeated until the alternating optimization converges to a suboptimal solution.

\begin{algorithm}[t]
	\caption{Alternating SCA-based waveform and beamforming design for DMA-assisted WPT (ASCA-DMA).} \label{alg:DMA}
	\begin{algorithmic}[1]
            \State \textbf{Input:} $\{\gamma_{i,l,m,n}\}_{\forall i, l, m, n}$, $\upsilon$
            \quad \textbf{Output:} $\{\bar{\mathbf{w}}_{n}^{(0)}\}_{\forall n}$, $\mathbf{q}^{(0)}$
            \State \textbf{Initialize:} $\mathbf{q}^{(0)}$ and $\bar{\mathbf{w}}^{(0)}_{n}, \forall n$, $P_c^\star =0$\label{alg1:line:init_start}
            \Repeat\label{alg1:line:alter_start}
                \State \hspace{-2mm} $\Upsilon^\star =0$, $\xi^\star =\infty$, $P_c \leftarrow P_c^\star$
                \Repeat\label{alg1:line:QSCA_start}
                    \State \hspace{-2mm}  $\xi \leftarrow \xi^\star$ 
                    \State \hspace{-2mm} Solve \eqref{RRprobdmaQQcon} to obtain $\mathbf{q}$ and  $\mathbf{q}^{(0)} \leftarrow \mathbf{q}$
                    \State \hspace{-2mm} $\xi^\star \leftarrow$ the objective value in \eqref{RRprobdmaQcona}
                \Until{$|1 - {\xi^\star}/{\xi}|\leq \upsilon$}\label{alg1:line:QSCA_end}
                \Repeat\label{alg1:line:digitSCA_start}
                    \State \hspace{-2mm}$\Upsilon \leftarrow \Upsilon^\star$
                    \State \hspace{-2mm} solve \eqref{probdmaSCA} to obtain $\bar{\mathbf{w}}_{n}$ and $\bar{\mathbf{w}}^{(0)}_{n} \leftarrow \bar{\mathbf{w}}_{n}, \forall n$
                    \State \hspace{-2mm} Compute $\Upsilon^\star$ using \eqref{probdmaSCAa}
                \Until{$|1 - {\Upsilon^\star}/{\Upsilon}|\leq \upsilon$}\label{alg1:line:digitSCA_end}
                \State \hspace{-2mm} $P_c^\star \leftarrow \Upsilon^\star$ 
            \Until{$|1 - P_c^\star/ P_c|\leq \upsilon$}\label{alg1:line:alter_end}
\end{algorithmic} 
\end{algorithm}

\subsection{Optimization Framework for Fully-Digital Architecture}

Note that the proposed framework for the DMA-assisted system can also be used for fully-digital architecture with some slight modifications.\footnote{In fact, the framework can also be straightforwardly adapted for a traditional hybrid architecture with a fully connected network of phase shifters.} Notably, alternating optimization is not needed since the variables are just digital precoders, and adapting SCA to find the suboptimal precoders is sufficient. Algorithm~\ref{alg:FD} illustrates the SCA-based optimization for fully-digital architecture. Notice that problem \eqref{probdmaSCA} can be easily modified to match the fully-digital transmitter. Thus, the expressions are not rewritten to avoid repetition.

\begin{algorithm}[t]
	\caption{SCA-based waveform and beamforming design for fully-digital WPT (SCA-FD).} \label{alg:FD}
	\begin{algorithmic}[1]
            \State \textbf{Input:} $\{\gamma_{i,l,m,n}\}_{\forall i, l, m, n}$, $\upsilon$
            \quad \textbf{Output:} $\{\mathbf{w}^{(0)}\}_{\forall n}$
            \State \textbf{Initialize:} ${\mathbf{w}}^{(0)}_{n}, \forall n$, $P_c^\star =0$\label{alg1:line:init_start}
            \Repeat
                \State \hspace{-2mm} $\Upsilon \leftarrow \Upsilon^\star$
                , solve \eqref{probdmaSCA} to obtain $\{{\mathbf{w}}_{n}\}_{\forall n}$
                \State \hspace{-2mm} ${\mathbf{w}}^{(0)}_{n} \leftarrow {\mathbf{w}}^{(0)}_{n}, \forall n$
                \State \hspace{-2mm} Compute $\Upsilon^\star$ using \eqref{probdmaSCAa}
            \Until{$|1 - \Upsilon^\star/ \Upsilon|\leq \upsilon$}
\end{algorithmic} 
\end{algorithm}

\subsection{Complexity}

The proposed ASCA-DMA algorithm consists of a low-complexity initialization followed by alternating optimization, while SCA is used to optimize each set of variables. Each iteration of SCA attempts to solve a quadratic program \cite{boyd2004convex} (\eqref{probdmaSCA} or \eqref{RRprobdmaQQcon}), whose complexity scales with a polynomial function of the problem size. In general, the degree of the polynomial mainly depends on the type of the solver. Let us consider a simple solver based on the Newton method with $\mathcal{O}(n^3)$ complexity \cite{boyd2004convex}, where $n$ is the problem size. Assume $U_1$ and $U_2$ are the number of required iterations (in the worst-case) for convergence of digital weights and metamaterial elements' weights, respectively. Furthermore, $U_3$ is considered the number of iterations required for convergence in alternating optimization. Hereby, the total complexity of the ASCA-DMA algorithm is $\mathcal{O}(U_1U_2U_3n^3)$, where $n$ scales with $M$, $N_v$, $N_h$, and $N_f$. There is also some additional complexity introduced by the initialization algorithm, which is negligible since the initialization procedure is low-complexity. Moreover, the SCA-FD has only a single SCA stage with a complexity $\mathcal{O}(U_4n^3)$, where $U_4$ is the required number of iterations for convergence of SCA in the worst-case.

\section{Initialization Algorithm}\label{sec:init}

Algorithm~\ref{alg:initial} illustrates the proposed initialization algorithm. Since the initialization algorithm has to be adaptable for multi-user scenarios, we start by proposing a method to allocate the output signal of the RF chains to the different ERs\footnote{The allocation is only for the initialization process, and there is no limitation in this regard in the optimization procedure.}. For this, we utilize the channel characteristics by naming $n_m^\star = \argmax_{n} |\boldsymbol{\gamma}_{m,n}|$
as the strongest sub-carrier channel between ER $m$ and the transmitter. Then, a coefficient $z_m$ is assigned to ER $m$ based on the gain introduced by its strongest channel, expressed as
\begin{equation}\label{eq:ratioalloc}
    z_m = 1 - \frac{|\boldsymbol{\gamma}_{m,n_m^\star}|}{\sum_{\bar{m} = 1}^M|\boldsymbol{\gamma}_{\bar{m},{n_{\bar{m}}^\star}}|}.
\end{equation}
More precisely, the RF chains are dedicated to the ERs based on this ratio such that the ERs with lower channel gains are served by more signals and vice versa. The allocation procedure is illustrated in lines \ref{alginit:startaloc}-\ref{alginit:endaloc} in Algorithm~\ref{alg:initial}. First, an RF chain is allocated to each ER, then, the rest of the RF chains are divided among ERs based on their $z_m$.

Let us denote $\mathcal{R}_m$ as the set of RF chains dedicated to ER $m$. Then, we initialize $q_{i,l}, i \in \mathcal{R}_m$ to compensate for the phase shift introduced by both $h_{i,l}$ and $\gamma_{i,l,m,n_m^\star}$. Specifically, we need to define $q_{i,l}, i\in \mathcal{R}_m$ such that 
\begin{equation}\label{eq:qinit}
    \phi^\star_{i,l} = \argmin_{\phi_{i,l}} \biggl\langle \bigl(\frac{j + e^{j\phi_{i,l}}}{2}\bigr)h_{i,l}\gamma_{i,l,m,n_m^\star}\biggr\rangle, \forall i,l,
\end{equation}
where $i \in \mathcal{R}_m$ and $q_{i,l}$ can be obtained accordingly. Notice that \eqref{eq:qinit} can be easily solved using a one-dimensional search with negligible complexity.

The next step is to initialize the amplitude and phase of the digital precoders. For this, let us proceed by defining the received RF power at the $m$th ER as
\begin{equation}
    P_{rf, m} =\frac{G^2}{2}  |\mathbf{a}_{m,n} \bar{\mathbf{w}}_n|^2 \!=\! \frac{G^2}{2} \sum_{n = 1}^{N_f}  \biggl|\sum_{i=1}^{N_v} \sum_{l=1}^{N_h} \gamma_{i,l,m,n} q_{i,l} h_{i,l} \omega_{i, n}\biggr|^2.
\end{equation}
Moreover, the output DC power of the rectifier is an increasing function of the input RF power when operating in the low-power regime below the breakdown region of the rectifier circuit's diode \cite{clerckx2018beneficial}. Motivated by this, we aim to increase the available RF power at each ER during the initialization process using the phases of the digital precoders with low complexity. One way to increase the available RF power for each ER is to facilitate the coherent reception of the signals at the receiver. This can be done by reducing the amount of phase shift introduced in the signal.  For this, we assume the initial digital weights to be $\omega_{i, n} = \bar{\omega}_{i, n}e^{j\hat{\omega}_{i, n}}$, while the dedicated signals to each ER have the same amplitude $\bar{\omega}_{i, n} = w_m, \forall n, i \in \mathcal{R}_m$. Moreover, the ideal phase initialization for maximizing the received RF power is obtained by solving
\begin{equation}\label{opt:RFpow}
    \argmax_{\hat{\omega}_{i, n} \in [0, 2\pi], \forall i \in \mathcal{R}_m, n} \sum_{n = 1}^{N_f}  \biggl|\sum_{i=1}^{N_v} \sum_{l=1}^{N_h} \gamma_{i,l,m,n} q_{i,l} h_{i,l}  \bar{\omega}_{i, n} e^{j\hat{\omega}_{i, n}}\biggr|^2.
\end{equation}
Meanwhile, solving this problem is not straightforward and introduces much additional complexity to our framework. Notably, the problem can be decoupled and solved individually for each sub-carrier without any change in the optimal solution. Still, there is a coupling between the digital weights of different RF chains given the same sub-carrier. For this, we further reduce the complexity by formulating the problem as
\begin{equation}\label{opt:low-comp_phase}
    \argmin_{\hat{\omega}_{i, n} \in [0, 2\pi]} \biggl|\bigg\langle\sum_{l=1}^{N_h} \gamma_{i,l,m,n} q_{i,l} h_{i,l} e^{j\hat{\omega}_{i, n}}\bigg\rangle\biggr|, \forall n,i \in \mathcal{R}_m.
\end{equation}
Although the reformulated problem may not have the same solution as \eqref{opt:RFpow}, it attempts to reduce the total amount of the phase shift of the received signal. Therefore, solving \eqref{opt:low-comp_phase} can lead to a suboptimal solution to \eqref{opt:RFpow} with much lower complexity, and by using a one-dimensional search. Note that utilizing such an approach is relevant since the goal is to have a reasonable initialization for the variables, which leads to feeding a feasible initial point to the optimization algorithm. The initialization procedure is illustrated through lines \ref{alginit:startpow}-\ref{alginit:endpow} in Algorithm~\ref{alg:initial}. For each ER, the frequency responses of the connected metamaterials to its dedicated RF chains are initialized. Then, the phases of the corresponding digital precoders are obtained. Finally, the amplitudes are iteratively increased until the EH requirement is met and a feasible solution is found. In the fully-digital architecture, the initialization algorithm follows the same procedure with one difference, i.e., each antenna element has a dedicated RF chain, and thus, a dedicated signal.

\begin{algorithm}[t]
	\caption{Initialization of digital precoders and the frequency response of the metamaterial elements.} \label{alg:initial}
	\begin{algorithmic}[1]
            \State \textbf{Input:} $\{\gamma_{i,l,m,n}\}_{\forall i, l, m, n}$, $\tau_s$, $\varsigma$
             \textbf{Output:} $\{\omega_{i,n}, q_{i,l}\}_{\forall i,l,n}$
            \State \textbf{Initialize:} Compute $z_m, \forall m$ using \eqref{eq:ratioalloc}
            \label{alginit:startaloc}\State Allocate one RF chain to each ER, $\mathcal{R}_m = \{m\}, \forall m$, $\bar{\mathcal{R}} = \{1, \ldots, M\}$, $w_m = 0, \forall m$
            \State $N'_v = N_v - M$, $N'_{rf, m} = \lceil z_m N'_v \rceil$, $\mathcal{R}' = \{\}$
            \Repeat
                \State \hspace{-2mm} $m^\star = \argmax_{m \notin \mathcal{R}'_m} z_m$, $\mathcal{R}' \leftarrow \mathcal{R}' \cup m^\star$, $i_c = M + 1$
                \Repeat
                    \State $\mathcal{R}_{m^\star} \leftarrow \mathcal{R}_{m^\star} \cup i_c$, $i_c \leftarrow i_c + 1$
                    \State $N'_{rf, m} \leftarrow  N'_{rf, m} - 1$, $N'_v \leftarrow N'_v - 1$
                \Until{$N'_{rf, m} = 0$ or $N'_v = 0$}
            \Until{$N'_v = 0$ or $|\mathcal{R}'| = M$}\label{alginit:endaloc}
            \For{$m = 1, \ldots, M$}\label{alginit:startpow}
                \State \hspace{-2mm} Compute $q_{i,l}, \forall i \in \mathcal{R}_m,  l$ using \eqref{eq:qinit} and \eqref{eq:lorentzweight}, $w_m = \tau_s$
                \State \hspace{-2mm} Solve \eqref{opt:low-comp_phase} to obtain $\hat{\omega_{i,n}}^\star, \forall i \in \mathcal{R}_m, n$ 
                \Repeat
                    \State $\bar{\omega}_{i, n} = w_m, \omega_{i, n} = \bar{\omega}_{i, n}e^{j\hat{\omega}_{i, n}} \forall i \in \mathcal{R}_m ,n$
                    \State Compute $P_{dc, m}$ using \eqref{eq:vout}, \eqref{eq:Pdc}, \eqref{eq:Erewrit1}, and \eqref{eq:Erewrit2}
                    \State $w_m \leftarrow \varsigma w_m$
                \Until{$P_{dc, m} \geq \bar{P}_m$}
            \EndFor\label{alginit:endpow}
\end{algorithmic} 
\end{algorithm}

\section{Numerical Analysis}\label{result}

In this section, we provide numerical analysis of the system performance. We consider an indoor office with a transmitter located at the center of the ceiling. The operating frequency of the system is $f_1 = 5.18$ GHz, which matches the characteristics of the utilized rectifier model \cite{clerckx2018beneficial}. The spacing between the elements in the DMA is $\lambda_1/5$, while $\lambda_1/2$ is the distance between two consecutive microstrips. Meanwhile, the inter-element distance is $\lambda_1/2$ in the fully-digital architecture. Thus, $N_v = N_h = \lfloor {2L}/{\lambda_1} \rfloor$ for the fully-digital system, and $N_v = \lfloor {2L}/{\lambda_1} \rfloor, N_h = \lfloor {5L}/{\lambda_1} \rfloor$ for the DMA-assisted system \cite{near-field}. Note that  $L$ is the array length while the arrays are considered to be square-shaped. We set the optimization parameters $\tau_s = 10^{-3}$, $\varsigma = 5$, and $\upsilon = 10^{-6}$. Without loss of generality, we set $G = 1$ and $\bar{\eta} = \frac{\pi}{4}$\footnote{In practice, the HPA output power is larger than the input power because $G > 1$. However, the proposed framework applies to any values of $G$.}. Finally, the rectifier parameters are $v_t = 25$ mV and $\eta_0 = 1.05$ \cite{clerckx2018beneficial, SISOAllClreckx, jointWFandBFMIMOclreckx}.

We utilize the characteristics of the \textregistered Rogers RO4000 series ceramic laminate to calculate the propagation coefficients of the microstrips. Specifically, we calculate the attenuation and propagation coefficients of a RO400C LoPro with a thickness of 20.7 mil (0.5258 mm) using the formulation provided in \cite{MyEBDMA}, which gives $\alpha = 0.356$ m$^{-1}$ and $\beta = 202.19$ m$^{-1}$. Based on the rectifier circuit design and simulations in \cite{clerckx2018beneficial}, the rectifier circuit diode enters the breakdown region when the received RF power is approximately $ 100\ \mu$W for a continuous wave ($N_f = 1$). Moreover, it has been shown that the maximum RF-to-DC conversion efficiency is approximately $20 \%$ for the mentioned setup. Hence, we establish a minimum requirement of $\Tilde{P}_{dc} = 20\ \mu$W for DC harvested power. In the figures, FD refers to the fully-digital architecture. Moreover, $d$ represents the distance between the ER and the center of the transmitter. Note that since the utilized problem formulation in \eqref{probmain} is equivalent to maximizing the PTE, the showcased results for power consumption inherently capture the PTE of the system. Specifically, the PTE and power consumption are inversely proportional.

In order to showcase the power consumption ($P_c$) in the figures, we first get the solution, i.e., the metamaterials' frequency response and the signal weights, by leveraging the proposed optimization framework. Then, $P_c$ is computed by sampling  \eqref{eq:power_consumption} using the Nyquist sampling theorem. Specifically, we consider a bandwidth $\textrm{BW} = 10$ MHz and sub-carrier spacing $\Delta f = \textrm{BW}/N_f$. Then, the sampling rate is computed based on the maximum frequency, such that $f_s = 2f_{N_f}$ is the sampling rate,  $\Delta t = 1/f_s$ is the sampling interval, and the signal duration is $1$ ms.

Fig.~\ref{fig:converge} provides the convergence performance of the proposed ASCA-DMA approach by presenting the power consumption at the end of each iteration of alternating optimization. It is seen that the objective value gradually decreases with the iterations until convergence, while the number of required iterations for convergence depends on the setup. For instance, it is shown that increasing $L$, $N_f$, or $M$ can increase the complexity of the problem, leading to more iterations. Fig.~\ref{fig:convergeall} provides a detailed algorithm performance by showing the results and the end of each iteration of optimization (both alternating and SCA), including the SCA algorithm for both optimizing metamaterials and digital precoders. It is seen that since the minimum harvested power increases when optimizing $\mathbf{q}$, the power consumption increases, while this facilitates decreasing the power consumption when optimizing the digital precoders. Therefore, the power consumption decreases gradually at the end of each iteration of the alternating optimization (red arrows) until convergence.

\begin{figure}[t]
    \centering
    \includegraphics[width=0.9\columnwidth]{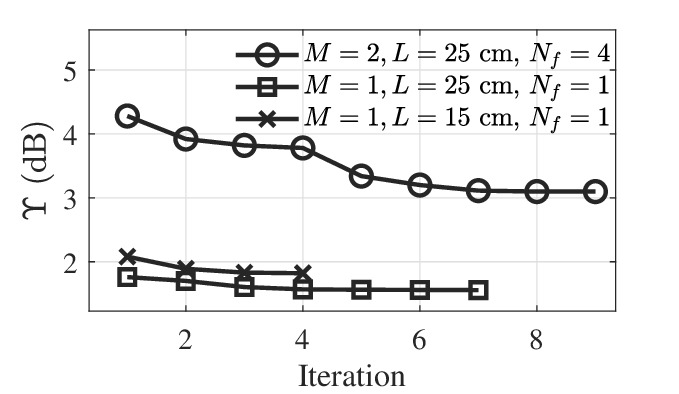}
    \caption{The convergence performance of ASCA-DMA for different $L$, $M$, and $N_f$ over iterations with $d = 2.2$ m and $P_{max} = 1$ W.}
    \label{fig:converge}
\end{figure}

\begin{figure}[t]
    \centering
    \includegraphics[width=0.8\columnwidth]{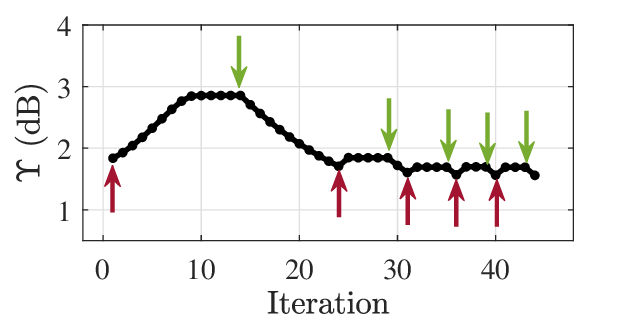}
    \caption{The convergence performance of ASCA-DMA for $M = 1$, $L = 25$ cm, $d = 2.2$ m, $N_f = 1$, and $P_{max} = 1$~W over all iterations (alternating and SCA). The red and green arrows indicate the starting points of optimization for the metamaterial elements' frequency response and digital precoders, respectively.}
    \label{fig:convergeall}
\end{figure}

Fig.~\ref{fig:overL} showcases the power consumption of the system as a function of the antenna length. Note that increasing $L$ reduces the power consumption since the number of elements and the array aperture increases. This leads to a better beam focusing capability, thus, delivering more power to the ERs given the same transmit power. Although DMA outperforms FD, the performance gap between DMA and FD highly depends on the system setup. For instance, the gap is larger for higher $P_{max}$ due to the larger number of RF chains and HPAs in FD. Moreover, observe that the performance gains caused by increasing $L$ depend also on the distance of the receiver from the transmitter. Specifically, when the distance becomes larger, the power consumption decreases more substantially by utilizing larger antenna arrays.

\begin{figure}[t]
    \centering
    \includegraphics[width=0.9\columnwidth]{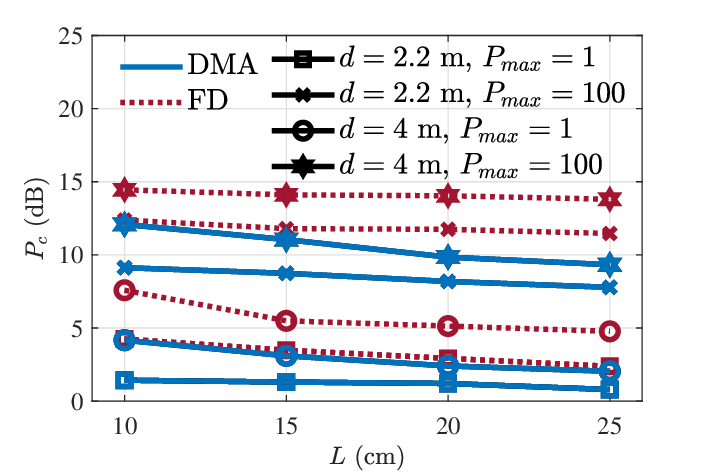}
    \caption{The power consumption as a function of $L$ for $N_f = 8$, $P_{max} \in\{1, 100\}$ W, and $d \in\{2.2, 4\}$ m.}
    \label{fig:overL}
\end{figure}

Our results in Fig.~\ref{fig:overN} corroborate that increasing the number of transmit tones reduces the power consumption. As discussed in Section~\ref{Section:Sys:TxRXSig}, this is because the HPAs operate in the linear regime, and higher $N_f$ can leverage the rectifier non-linearity and deliver more DC power to the ERs via waveform optimization. However, as previously mentioned, the DMA architecture requires less power consumption, but the performance gap between DMA and FD highly depends on the system parameters. Meanwhile, observe that increasing $N_f$ leads to more performance gains in FD compared to DMA since the number of signals/RF chains is relatively larger in FD, and this provides additional degrees of freedom in designing the transmit signal. Thus, the performance gap between DMA and FD becomes smaller as $N_f$ increases.

\begin{figure}[t]
    \centering
    \includegraphics[width=0.8\columnwidth]{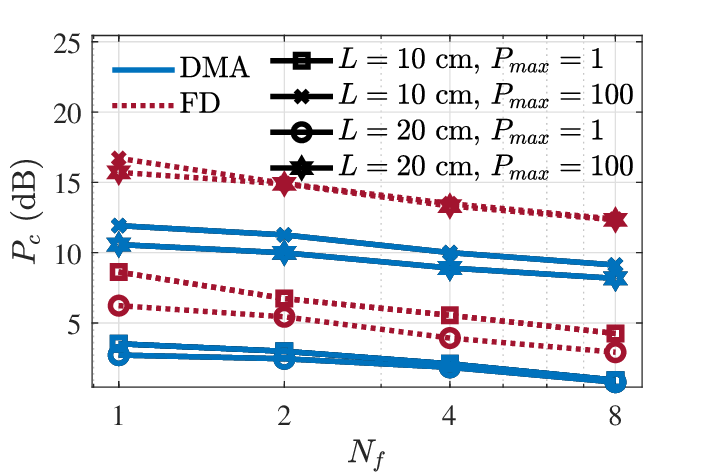}
    \caption{The power consumption as a function of $N_f$ for $P_{max} \in\{1, 100\}$~W, $L \in\{10, 20\}$ cm, and $d = 2.2$~m.}
    \label{fig:overN}
\end{figure}

Fig.~\ref{fig:overd} illustrates the impact of the distance $d$ of the ER from the transmitter on the system performance. It is obvious that the power consumption increases with the distance since the path loss becomes larger and more transmit power is required to overcome it and meet the EH requirements. Interestingly, it can be seen that the performance gains caused by increasing the antenna length are more substantial when the receiver is located at larger distances.

\begin{figure}[t]
    \centering
    \includegraphics[width=0.9\columnwidth]{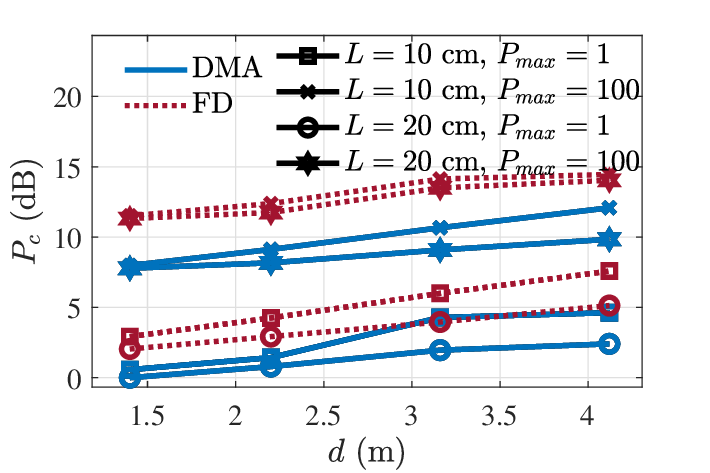}
    \caption{The power consumption as a function of ER distance for $N_f = 8$, $P_{max} \in\{1, 100\}$ W, and $L \in\{10, 20\}$ cm.}
    \label{fig:overd}
\end{figure}

The impact of the number of ERs on system performance is illustrated in Fig.~\ref{fig:overM} for different system parameters. As expected, the power consumption increases with $M$ since more EH requirements must be met, leading to more required transmit power. As previously shown, DMA outperforms FD, but the performance gap between these architectures varies with the system setup and parameters such as antenna length and HPAs' saturation power.

\begin{figure}[t]
    \centering
    \includegraphics[width=0.85\columnwidth]{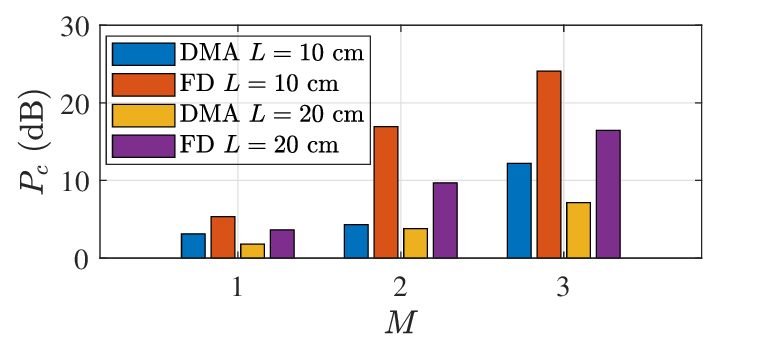}
    \includegraphics[width=0.85\columnwidth]{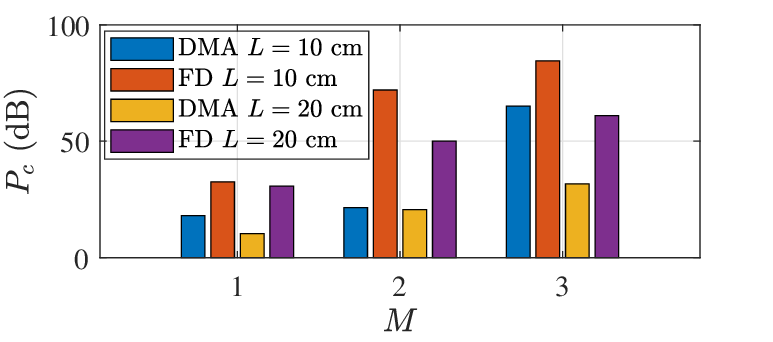}
    \caption{The power consumption over $M$ for (a) $P_{max} = 1$~W (top) and (b) $P_{max} = 100$ W (bottom), while $N_f = 4$, and $L \in\{10, 20\}$~cm. The ERs are located at $d = 3.1$~m.}
    \label{fig:overM}
\end{figure}

Fig.~\ref{fig:field_analysis} provides some insights regarding the beam focusing capability in the near-field WPT by illustrating the normalized received RF power in each spatial point of the area. Note that the received signal at each point is normalized by its path loss to remove the impact of the distance. In Fig.~\ref{fig:field_analysis}a, it is seen that when the ER is located in the near-field region, the beam is focused around its location, while the beam trace fades increasingly past the ER. Such phenomena can have a huge benefit in reducing the RF emission footprint in the environment, which facilitates the implementation of environmentally friendly WPT systems \cite{lópez2023highpower}. Meanwhile, the beam pattern for a located device in the far-field is formed in the receiver's direction, as illustrated in Fig.~\ref{fig:field_analysis}b. This may be highly disadvantaged in interference-sensitive applications since the generated beam may cause difficult-to-handle interference in the signals conveying information, e.g., in SWIPT.

\begin{figure}[t]
\centering
    \includegraphics[width=0.49\columnwidth]{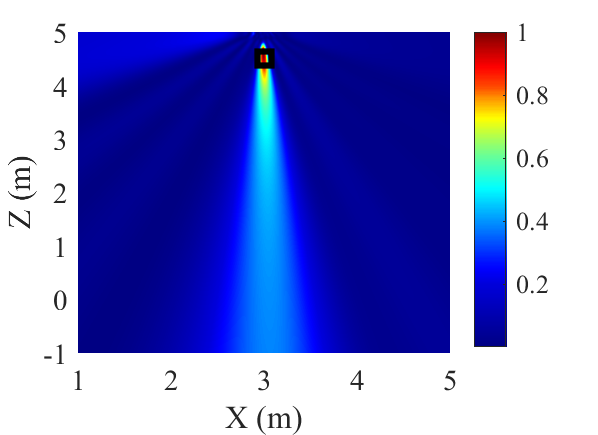} 
    \includegraphics[width=0.49\columnwidth]{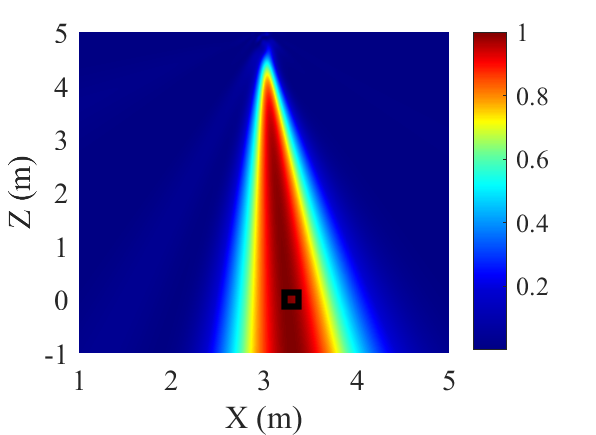}
    \caption{The normalized received RF power (W) in the area when the energy receiver is located at (a) the near-field region (left) and (b) the far-field region (right) in the DMA-assisted system with $L = 30$~cm and $N_f = 1$.}
    \label{fig:field_analysis}
\end{figure}

Fig.~\ref{fig:initcomp} compares the performance of the proposed initialization method with a random initialization approach. For the latter, the variables are generated randomly 1000 times, and the set of variables that leads to the least power consumption while satisfying the EH requirements is chosen as the initial point of the optimization. Moreover, if the EH constraints are not met within 1000 realizations, the random generation is continued until the EH requirements are met. It is seen that the initialization method proposed in Section~IV is effective and can lead the optimization framework to provide significantly better results in both transmit architectures.
\begin{figure}[t]
    \centering
    \includegraphics[width=0.9\columnwidth]{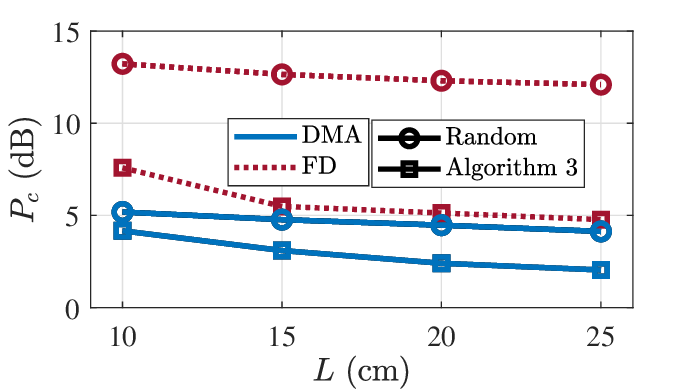}
    \caption{The power consumption using different initialization methods in DMA and fully-digital architectures as a function of $L$ for $M = 1$, $P_{max} = 1$~W, $d = 2.2$ m, and $N_f = 8$.}
    \label{fig:initcomp}
\end{figure}

The received RF waveform for both transmit architectures is presented in Fig.\ref{fig:waveform}.a. As previously mentioned, high PAPR waveforms are beneficial for enhancing the performance in terms of DC harvested power \cite{papr2}, and our simulations verify this by showing that the received signal experiences high peak amplitudes at specific intervals. Note that the peak-to-peak time depends on the characteristics of the EH circuit, mainly the capacitor. Recall that when HPAs are operating in the linear regime, as in our case, the HPA does not introduce distortion to the signal. Thus, it is beneficial to utilize multiple tones to leverage the rectifier's non-linearity. Moreover, the average amplitude of the frequency components among the RF chains and the sub-carrier is showcased in Fig.\ref{fig:waveform}.b. It is seen that all frequency components are utilized and power is allocated to them with different ratios. Moreover, observe that the transmit power is distributed among all RF chains in both architectures to efficiently leverage the degrees of freedom in focusing the signal toward the receiver and meeting the EH requirement with less power consumption.

\begin{figure}[t]
\centering
    \includegraphics[width=0.49\columnwidth]{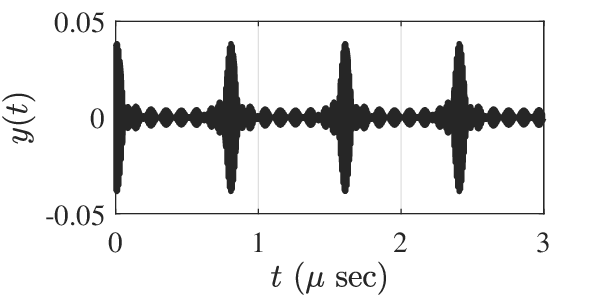} 
    \includegraphics[width=0.49\columnwidth]{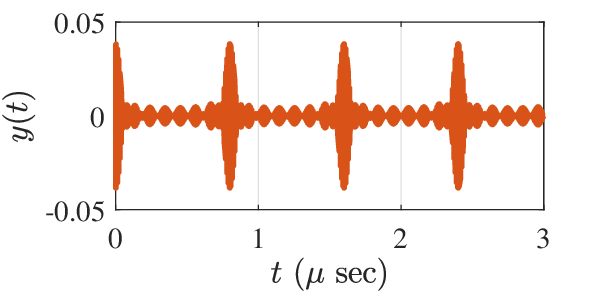} \
    \includegraphics[width=0.49\columnwidth]{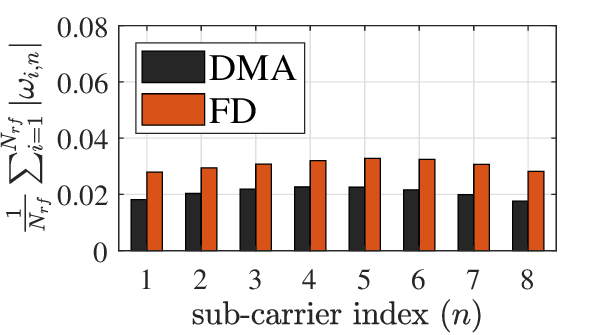} 
    \includegraphics[width=0.49\columnwidth]{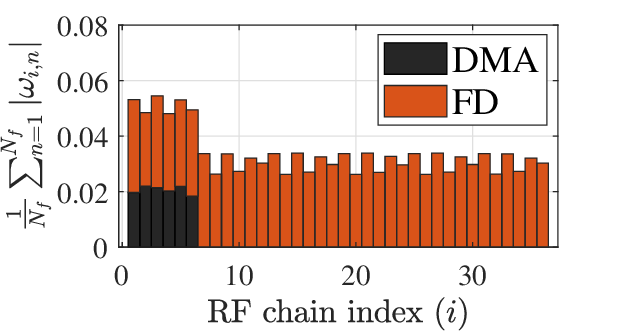}
    \caption{(a) The received signal with DMA (top-left) and FD (top-right), and (b) the average of the amplitude of the frequency components of the RF chains output signals over the RF chains (bottom-left) and over the sub-carriers (bottom-right) for $M = 1$, $L = 20$ cm, and $N_f = 8$.}
    \label{fig:waveform}
\end{figure}

Some discussions on the practical challenges and complexities of implementing both architectures are in order. Notice that many more RF chains are needed in an FD array compared to a DMA with the same size, as illustrated in Fig.~\ref{fig:elementnum}. For instance, 64 RF chains are needed in a 25 $\times$ 25 cm$^2$ FD array and just 9 for a DMA with the same size at $f_1 = 5.18$ GHz. Notably, as $L$ substantially increases, the number of required RF chains for an FD array increases drastically as well, making DMA the reasonable implementation option. However, this must be taken with caution as studies considering mutual coupling are necessary. Such practical considerations may impact the performance of the DMA, especially when the inter-element distance becomes smaller. On the other hand, both architectures might lose some performance gains due to the sub-optimality of the algorithm. Thus, the actual performance gaps and trade-offs might change, and obtaining optimal solutions for this problem is challenging and still an open topic for research.

\begin{figure}[t]
    \centering
    \includegraphics[width=0.45\columnwidth]{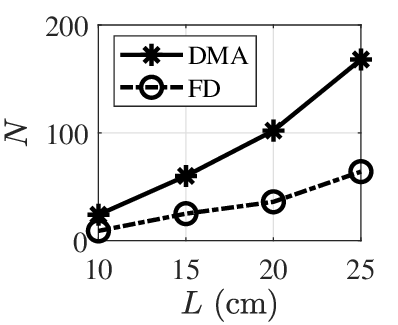}
    \includegraphics[width=0.45\columnwidth]{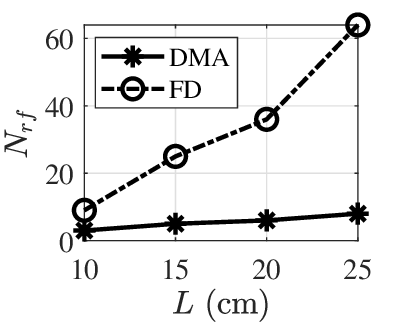}
    \caption{The total number of (a) antenna elements (left) and (b) RF chains (right) as a function of antenna length $L$.}
    \label{fig:elementnum}
\end{figure}

\section{Conclusion and Future Work}\label{conclusion}

In this paper, we investigated a multi-antenna near-field WPT system with a DMA-assisted transmitter to charge multiple non-linear EH devices. Furthermore, we proposed an optimization framework relying on alternating optimization and SCA for the joint waveform optimization and beam focusing to minimize the system power consumption while meeting ERs' EH requirements. Numerical results showed that DMA-assisted architecture outperforms the fully-digital structure in terms of power consumption, while their performance gap depends on the system setups and parameters such as antenna length, saturation power of the HPAs, number of ERs, and ER distance to the transmitter. Moreover, we showed that increasing the antenna length or the number of tones can enhance the performance. Finally, we verified that the transmitter can focus the energy beams on the spatial points in the near-field region, while energy beams are formed toward the devices' direction in the far-field.

As a prospect for future research, we may delve deeper into the signal generation aspect by analyzing the power consumption based on the number of tones. Another research direction is to utilize optimization approaches with lower complexity, e.g., relying on machine learning, to learn online the input-output relation of the system's non-linear components while optimizing the transmit waveform accordingly. Moreover, we may consider a mutual coupling-aware optimization to investigate the impact of the coupling between the antenna elements in the DMA architecture, which requires novel solutions and might change the system performance depending on the inter-element distance. Additionally, the proposed frameworks can be extended to traditional hybrid and analog architectures consisting of phase shifters to investigate the power consumption of those architectures compared to fully-digital and DMA-assisted structures. Finally, proposing novel solutions for DMA-assisted SWIPT systems with non-linear energy harvesters can be another interesting research opportunity.

\appendix

\subsection{Proof of Theorem~\ref{theorem:1}}\label{appen1}

We can write 
\begin{align}
    &\sum_{i = 1}^{N_{rf}} \mathbb{E}_t\biggl\{\sqrt{\sum_{l = 1}^{N_h} x^{DMA}_{i,l}(t)^2}\biggr\} \labelrel\leq{reformWfix1:1} \sum_{i = 1}^{N_{rf}} \sqrt{\mathbb{E}_t\biggl\{\sum_{l = 1}^{N_h} x^{DMA}_{i,l}(t)^2\biggr\}} \nonumber \\ &\labelrel={reformWfix1:2} \sum_{i = 1}^{N_{rf}} \sqrt{\sum_{l = 1}^{N_h} \mathbb{E}_t\biggl\{x^{DMA}_{i,l} (t)^2\biggr\}} \labelrel={reformWfix1:3} \sum_{i = 1}^{N_{rf}} \sqrt{\sum_{l = 1}^{N_h} \sum_{n = 1}^{N_f}\frac{G^2}{2}|\omega_{i,n} q_{i,l}h_{i,l}|^2},
\end{align}
where \eqref{reformWfix1:1} comes from the Jensen inequality \cite{mcshane1937jensen}, \eqref{reformWfix1:2} from the linearity of the mathematical average operator, and \eqref{reformWfix1:3} from the equivalence between the average power of the signal in time and frequency domain. Furthermore, we can write 
\begin{equation}
    \sum_{i = 1}^{N_{rf}} \sqrt{\sum_{l = 1}^{N_h} \sum_{n = 1}^{N_f}|\omega_{i,n} q_{i,l}h_{i,l}|^2} \!=\! \sum_{i = 1}^{N_{rf}} \sqrt{\sum_{l = 1}^{N_h} |q_{i,l}h_{i,l}|^2} \sqrt{\sum_{n = 1}^{N_f}|\omega_{i,n}|^2},
\end{equation}
where $\sqrt{\sum_{n = 1}^{N_f}|\omega_{i,n}|^2}$ represents the $l_2$ norm operator for the vector $[\omega_{i,1}, \ldots, \omega_{i, N_{rf}}]^T$, which is convex, leading to the convexity of the upper bound for fixed $q_{i,l}$.

\subsection{Proof of Theorem~\ref{theorem:2}}\label{appen2}

We relax \eqref{probdmaQg} by limiting the values of $q_{i,l}$ to lie within the Lorentzian circle in the complex plane. By utilizing the fact that \eqref{probdmaQcona} is a positive and increasing function of the rectifier's output voltage and using the epigraph form, the relaxed problem can be written as \eqref{probdmaQQcon}. Note that each configuration of the frequency response of a metamaterial corresponds to a point on the Lorentzian-constrained circle. Imagine $\vec{e}$ is the vector that represents the direction and gain of a point on the Lorentzian circle, while this direction and gain impacts the transmit signal. Meanwhile, the goal of \eqref{probdmaQQcon} is to increase the minimum output voltage of the ERs. Therefore, when shaping the transmit signal toward different ERs, it is obvious that $\vec{e}$ should be chosen with the maximum gain possible in the required direction to improve the signal strength at the receiver. Furthermore, the maximum feasible gain introduced by a metamaterial element along a specified direction happens when the point is exactly on the Lorentzian-constrained circle in that direction. Hence, although \eqref{probdmaQQQg} is a relaxed version of the constraint \eqref{probdmaQg}, considering the final solution of the metamaterials as $a\vec{e},\ 0 \leq a \leq 1$, the only solution that leads to the maximum gain for the desired direction is $a = 1$. Thus, the solution of \eqref{probdmaQQcon} is the same as \eqref{probdmaQcon} leading to a configuration positioned on the Lorentzian circle and the equivalence between the problems.

\ifCLASSOPTIONcaptionsoff
  \newpage
\fi

\bibliography{ref_abbv}
\bibliographystyle{ieeetr}

\begin{IEEEbiography}[{\includegraphics[width=1in,height=1.25in,clip,keepaspectratio]{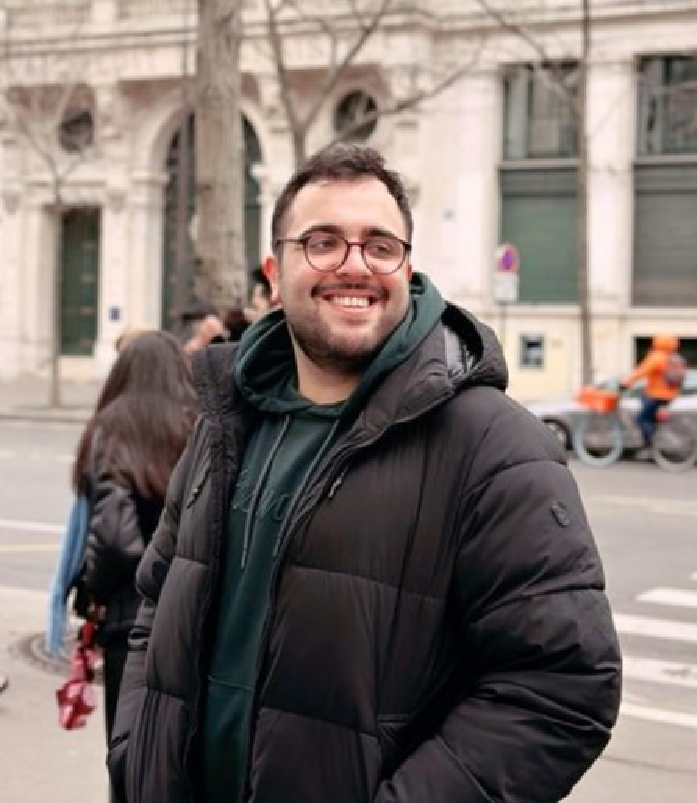}}]{Amirhossein Azarbahram}~(Graduate Student~Member, IEEE) received the B.Sc. degree in telecommunications from KN Toosi University of Technology, Iran, in 2020, the M.Sc. degree in communications system from Sharif University of Technology, Iran, in 2022. He is currently a Doctoral researcher at the Centre for Wireless Communications (CWC), University of Oulu, Finland. He was a visiting researcher at the Connectivity Section (CNT) of the Department of Electronic Systems, Aalborg University (AAU), Denmark, in 2024. His research interests include wireless communications and RF wireless power transfer. 
\end{IEEEbiography}

\begin{IEEEbiography}[{\includegraphics[width=1in,height=1.25in,clip,keepaspectratio]{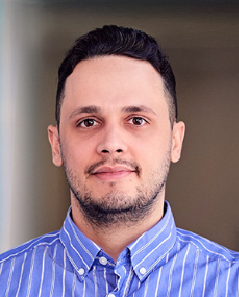}}]{Onel L. A. López}~(Senior~Member, IEEE) received the B.Sc. (1st class honors, 2013), M.Sc. (2017), and D.Sc. (with distinction, 2020) degree in Electrical Engineering from the Central University of Las Villas (Cuba), the Federal University of Paraná (Brazil), and the University of Oulu (Finland), respectively. From 2013-2015, he served as a specialist in telematics at the Cuban telecommunications company (ETECSA). He is a collaborator to the 2016 Research Award given by the Cuban Academy of Sciences, a co-recipient of the 2019 and 2023 IEEE European Conference on Networks and Communications (EuCNC) Best Student Paper Award, and the recipient of both the 2020 best doctoral thesis award granted by Academic Engineers and Architects in Finland TEK and Tekniska Föreningen i Finland TFiF in 2021 and the 2022 Young Researcher Award in the field of technology in Finland. He is co-author of the books entitled ``Wireless RF Energy Transfer in the massive IoT era: towards sustainable zero-energy networks'', Wiley, 2021, and ``Ultra-Reliable Low-Latency Communications: Foundations, Enablers, System Design, and Evolution Towards 6G'', Now Publishers, 2023. Since 2024, he has been an Associate Editor of the IEEE Transactions on Communications and IEEE Wireless Communications Letters. He is currently an Associate Professor (tenure track) in sustainable wireless communications engineering at the Centre for Wireless Communications (CWC), Oulu, Finland. His research interests include sustainable IoT, energy harvesting, wireless RF energy transfer, wireless connectivity, machine-type communications, and cellular-enabled positioning systems.
\end{IEEEbiography}

\begin{IEEEbiography}[{\includegraphics[width=1in,height=1.25in,clip,keepaspectratio]{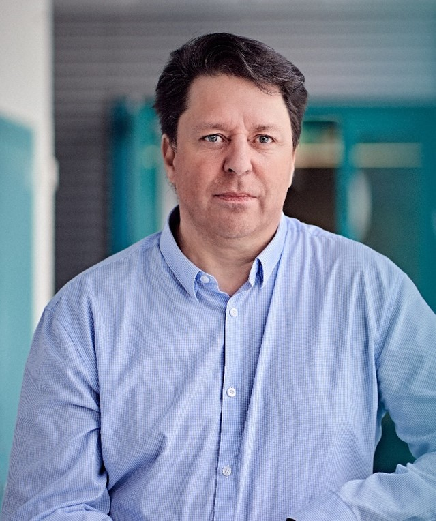}}]{Matti~Latva-aho}~(Fellow, IEEE) received his M.Sc., Lic.Tech., and Dr.Tech. (Hons.) degrees in Electrical Engineering from the University of Oulu, Finland, in 1992, 1996, and 1998, respectively. From 1992 to 1993, he was a Research Engineer at Nokia Mobile Phones in Oulu, Finland, after which he joined the Centre for Wireless Communications (CWC) at the University of Oulu. Prof. Latva-aho served as Director of CWC from 1998 to 2006 and was Head of the Department of Communication Engineering until August 2014. He is currently a Professor of Wireless Communications at the University of Oulu and the Director of the National 6G Flagship Programme. He is also a Global Fellow at The University of Tokyo. Prof. Latva-aho has published over 500 conference and journal papers in the field of wireless communications. In 2015, he received the Nokia Foundation Award for his achievements in mobile communications research.
\end{IEEEbiography}

%






\end{document}